\newtheorem{theorem}{Theorem}
\newtheorem{lemma}[theorem]{Lemma}
\newtheorem{definition}[theorem]{Definition}
\theoremstyle{remark}
\newcommand{\ket}[1]{|#1\rangle}
\newcommand{\bra}[1]{\langle#1|}
\newcommand{\ketbra}[2]{|#1\rangle\langle#2|}
\newcommand{\tr}{\text{\rm tr}}
\newcommand{\ttr}[1]{\text{\rm tr}\left(#1\right)}
\newcommand{\cC}{\mathcal{C}}
\newcommand{\cN}{\mathcal{N}}
\newcommand{\cK}{\mathcal{K}}
\newcommand{\cL}{\mathcal{L}}
\newcommand{\eps}{\varepsilon}
\newcommand{\cW}{\mathcal{W}}
\newcommand{\Var}{\mathsf {Var}}
\newcommand{\Cov}{\mathsf{Cov}}
\newcommand{\E}{\mathsf{E}}
\title{Covariance Decomposition as a Universal Limit on Correlations in Networks}
\author{Salman Beigi$^1$, Marc-Olivier Renou$^2$}
\affil{\it \footnotesize $^1$School of Mathematics, Institute for Research in Fundamental Sciences (IPM)\\ \it \footnotesize P.O.~Box 19395-5746, Tehran, Iran\\
\it \footnotesize  $^2$ICFO-Institut de Ciencies Fotoniques, 08860 Castelldefels (Barcelona), Spain}
\begin{document}

\maketitle

\begin{abstract}
Parties connected to independent sources through a network can generate correlations among themselves. 
Notably, the space of feasible correlations for a given network, depends on the physical nature of the sources and the measurements performed by the parties.
In particular, quantum sources give access to nonlocal correlations that cannot be generated classically. 
In this paper, we derive a universal limit on correlations in networks in terms of their covariance matrix. 
We show that in a network satisfying a certain condition, the covariance matrix of any feasible correlation can be decomposed as a summation of positive semidefinite matrices each of whose terms corresponds to a source in the network.
Our result is universal in the sense that it holds in \emph{any} physical theory of correlation in networks, including the classical, quantum and all generalized probabilistic theories.

\end{abstract}


\section{Introduction}

The causal inference problem, i.e., determining the causes of observed phenomena, is a fundamental problem in many disciplines. 
Given two categories of \emph{observed} and \emph{latent} variables, it asks to deduce the causation patterns between the two sets of variables, where the former ones can be measured while the latter ones are hidden.  More specifically, the problem is whether the statistics of the observed variables can be modeled with a given causal pattern~\cite{Spirtes2000CausationPrediction,Pearl2009Causality}. 

In this paper, we restrict ourselves to \emph{network} causation patterns, which are determined by a bipartite graph with one part for observed variables and one part for latent variables~\cite{Fritz2012BeyondBell,Branciard2010}. We can think of the vertices associated to latent variables as \emph{sources} that produce \emph{signals} that are \emph{independent} of each other. We can also think of vertices associated to observed variables as \emph{parties} who receive signals from their adjacent sources. After receiving these signals, each party performs a measurement whose output determines the value of the corresponding observed variable. For instance, if the signals are classical random variables, the measurement is a function of these variables. Here, we emphasize that the distribution of signals (latent variables) are mutually independent, and an observed variable equals the measurement outcome applied on the adjacent sources. See Figure~\ref{fig:nosignaling} for an example of a network. Now given such a network with $n$ parties, and  a joint probability distribution $p(a_1, \dots, a_n)$ on their observed variables, the question is, can $p(a_1, \dots, a_n)$ be modeled as the output distribution of the measurement of sources distributed in the network? 

Quantum physics introduces seminal aspects to the causal inference problem~\cite{HensonLimitCorrelationGeneralisedBayNet,FritzBeyondBell2,Allen2017QcommonCauses}. The point is that the sources may produce quantum signals, i.e., multipartite entangled states, and parties may perform quantum measurements on the received signals. In this case the answer to the causal inference problem may differ from its answer in the completely classical (local) setting. Indeed, already in the primary example of Bell's scenario we observe nonlocal correlations in the quantum setting that cannot be modeled classically~\cite{bell1964einstein}; this is Bell's nonlocality. It is the elementary manifestation of a broader phenomenon, called \emph{network nonlocality}, in which quantum nonlocal correlations admitting a model in some network cannot be modeled classically in the same network. 

The existence of quantum models, which give access to more correlations than classical ones, raises the question of the ultimate limits on correlations in networks.
In the case of Bell's scenario, this ultimate limit is given by the \emph{no-signaling principle}~\cite{popescu1998causality}. 
We imagine that any correlation satisfying this principle can be obtained through `measurements' of `states' in some hypothetical \emph{Generalised Probabilistic Theory} (GPT)~\cite{Barrett2007GPT, Short2010couplers, Janotta2013norestriction, Skrzypczyk2009couplers}. Such theories generalize quantum physics and formalize the Popescu-Rohrlich box~\cite{Popescu1994,Scarani2006PRbox}.
However, the no-signaling principle alone is insufficient to define the ultimate limits of correlation in networks~\cite{gisin2020constraints, bancal2021}. 
In the following, we define these limits through the \emph{causality principle} (see Figure~\ref{fig:nosignaling}).
This resolusion, combined with the \emph{inflation technique}~\cite{wolfe2019inflation}, allow us to derive a universal limit on correlations in networks.

\begin{figure}
\begin{center}
\includegraphics[scale=0.35]{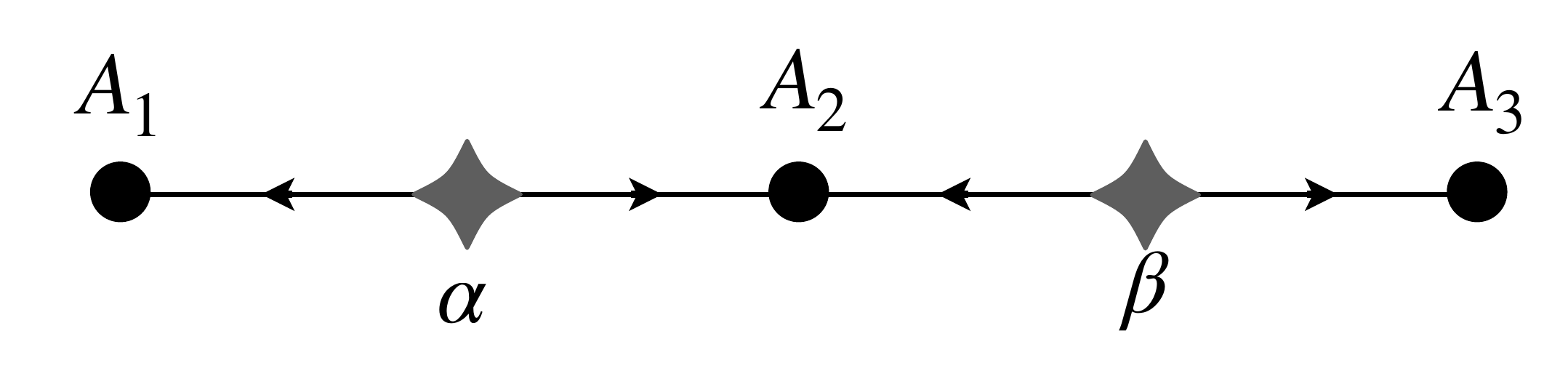}
\caption{\footnotesize This is an example of a network with two sources $S_\alpha, S_\beta$ and three parties $A_1, A_2, A_3$. Each party after receiving the signal from its adjacent sources, performs a measurement to determine her output. We denote the output of party $A_i$ by $a_i$, and the joint distribution of outputs by $p(a_1, a_2, a_3)$.\newline
\emph{First causality law:} Imaging that this network is a part of a larger network, yet the structure of the network in the neighborhood of $A_2$ is the same. Then the marginal distribution $p(a_2)$ in the larger network remains the same.\newline
\emph{Second causality law:}
Since in this network $A_1$ and $A_3$ do not have a common source, we have that $p(a_1, a_3) = p(a_1) p(a_3)$. 
}
\label{fig:nosignaling}
\end{center}
\end{figure}  

Recently, Kale et al~\cite{kela2019semidefinite} proved a limit on the set of distributions $p(a_1, \dots, a_n)$ that can be modeled in a network in the classical setting.  They showed that the covariance matrix of functions $f_1(a_1), \dots, f_n(a_n)$ of output variables associated to such a distribution $p(a_1, \dots, a_n)$ can be decomposed as a summation of positive semidefinite matrices whose terms somehow correspond to sources in the network. Later, the same result is proven by {\AA}berg et al~\cite{aaberg2020semidefinite} in the quantum setting, and discussed in Kraft et al~\cite{kraft2020characterizing}. The main contribution of our paper is that this decomposition holds to exist in a large class of networks in any underlying physical theory satisfying the aforementioned causality principle. More precisely, we prove that in any such physical theory, for local scalar functions $f_1, \dots, f_n$ of outputs of parties, their covariance matrix can be decomposed into a summation of positive semidefinite terms that are associated to sources in the network.  

In the rest of this section, after giving some preliminary definitions, we present the statement of our main result and explain its proof techniques.

\subsection{Main result}\label{subsec:MainResult}

In this paper, we adopt Dirac's ket-bra notation. We let $[d]:=\{1, \dots, d\}$ and denote the computational basis of the $d$-dimensional complex vector space $\mathbb C^d$ by $\{\ket 1, \dots, \ket d\} = \{\ket i:\, i\in [d]\}$. The complex conjugate of $x\in\mathbb C$ is $\bar x$.  Capital letters represent matrices and the $(i, j)$-th entry of matrix $M$ is denoted by $M_{ij}$. For two matrices $M, N$ of the same size, $M\circ N$ is the \emph{Schur or Hadamard product} of $M$ and $N$, that is a matrix with entries 
$$(M\circ N)_{ij} = M_{ij}\cdot N_{ij}.$$
$M\succeq 0$ indicates that $M$ is positive semidefinite.
We represent the Kronecker delta function by $\delta_{k, \ell}$.

\paragraph{Networks.}
A network $\cN$ is a bipartite graph whose vertex set consists of sources and parties. The sources are denoted by $S_\alpha$ for $\alpha=1, \dots, m$, and parties are denoted by $A_i$ for $i=1, \dots, n$. $S_\alpha\to A_i$ (and sometimes $\alpha\to i$) indicates that the source $S_\alpha$ is connected to the party $A_i$. 
We assume throughout this paper that the networks do not have isolated vertices meaning that each source is adjacent to at least one party, and each party is adjacent to at least one source. 
Moreover, we assume  that there are no two sources in the network whose adjacent sets of parties are comparable under inclusion since otherwise one of them is redundant and can be merged with the other one.

\paragraph{Output distribution.} In the network $\cN$, each source $S_\alpha$ sends some \emph{signals} to its adjacent parties, and each party $A_i$ after receiving signals from its adjacent sources, performs a \emph{measurement} and outputs the result $a_i$. The joint distribution of these outputs is denoted by $p(a_1, \dots, a_n)$.

\paragraph{Underlying physical theory.} 
In this paper, we \emph{do not make any assumption about the nature of signals and measurements}.
Hence, we consider any correlation which can be obtained in any Generalized Probabilistic Theory (GPT), i.e., from any hypothetical model for correlations in networks beyond classical and quantum physics.

The causal structure of the Bell scenario can be \emph{derived} from the no-signaling principle alone. This approach fails in general networks. Hence, we need to \emph{accept} the network causal structure. Here, we assume the \emph{causality principle} from which two \emph{laws} are followed.
By the first causality law, the marginal distribution of the outputs of a group of parties depends only on the sources adjacent to them, and not on the structure of the network beyond those sources.
By the second causality law, two groups of parties who do not share a common source are uncorrelated, and their marginal distribution factorizes. We think of these two causality laws as the only generic restrictions on GPTs in networks.\footnote{Our first causality law follows from the the no-signaling principle, and our second causality law is called the \emph{independence principle} in~\cite{gisin2020constraints, bancal2021}. Thus, as an alternative solution we can also take the No-Signaling and the Independence principles (called NSI~\cite{gisin2020constraints, bancal2021}) instead of the causality principle. See~\cite{pironioToBePublished} for a discussion on these different approaches.}
See Figure~\ref{fig:nosignaling}.

We also assume that we can make \emph{independent and identical} copies of the sources and parties of a network $\cN$. That is, we assume that we can repeat the process (experiment) under which the sources produce their signals and the parties perform their measurements.
For instance, for a source $S_\alpha$ of $\cN$ that emits some signal, we can make an independent copy $S_{\alpha'}$ of it that emits the same signal. We emphasize that we do not clone this signal; we only repeat the process under which it is produced.
We give more details on these assumptions and their consequences in Subsection~\ref{subsec:inflation}.

\paragraph{Covariance matrix.}
Suppose that each party $A_i$ applies a real or complex function $f_i(a_i)$ on her output. Associated to the distribution $p(a_1, \dots, a_n)$ and these functions
we consider the \emph{covariance matrix} $\cC=\cC(f_1, \dots, f_n)$. This is an $n\times n$ matrix whose $(i, j)$-th entry equals 
$$\cC_{ij}=\Cov(f_i, f_j) = \E[\bar f_if_j] -\E[\bar f_i]\cdot \E[f_j] = \E\big[  (\bar f_i-\E[\bar f_i])\cdot (f_j-\E[f_j])  \big],$$
where the expectations are with respect to the output distribution $p(a_1, \dots, a_n)$. It is well-known and can easily be verified that the covariance matrix is always positive semidefinite. 

\medskip
In this paper, we are interested in certain decompositions of covariance matrices. 

\begin{definition}[$\cN$-compatible matrix decomposition] \label{def:matrix-decomp}
Let $\cN$ be a network with sources $S_\alpha$, $\alpha=1, \dots, m$, and parties $A_i$, $i=1, \dots, n$. For any source $S_\alpha$ let $\cL_\alpha$ be the set of $n\times n$ matrices $M_\alpha$ such that 
\begin{itemize}
\item[--] $M_\alpha$ is positive semidefinite,
\item[--] The $(i, j)$-th entry of $M_\alpha$ is non-zero only if $\alpha\to i$ and $\alpha\to j$.
\end{itemize}
Then we say that a positive semidefinite $n\times n$ matrix $M$ admits an \emph{$\cN$-compatible matrix decomposition} if there are $M_\alpha\in\cL_\alpha$, $\alpha=1, \dots, m$, such that $M= \sum_\alpha M_\alpha$.
\end{definition}

For an example of a network-compatible matrix decomposition, consider the network $\cN$ of Figure~\ref{fig:nosignaling}. As mentioned in the caption of this figure, by the independence principle $p(a_1, a_3)=p(a_1) p(a_3)$. This means that $\cC_{13}=\bar\cC_{31}=\Cov(f_1, f_3)=0$.  Then an $\cN$-compatible matrix decomposition for the covariance matrix $\cC$ takes the form
\begin{align*}
\cC=\begin{pmatrix}
\Var(f_1) & \Cov(f_1, f_2) & 0\\
\Cov(f_2, f_1) & \Var(f_2) & \Cov(f_2, f_3)\\
0 & \Cov(f_3, f_2) & \Var(f_3)
\end{pmatrix} =  \begin{pmatrix}
\ast & \ast & 0\\
\ast & \ast & 0\\
0 & 0 & 0
\end{pmatrix} +  \begin{pmatrix}
0 & 0 & 0\\
0 & \ast & \ast \\
0 & \ast & \ast
\end{pmatrix},
\end{align*}
where the two matrices on the right hand side are positive semidefinite and belong to $\cL_\alpha$ and $\cL_\beta$, respectively.

In this paper, we prove that the covariance matrix $\cC(f_1, \dots, f_n)$ defined above admits an $\cN$-compatible matrix decomposition with the minimal aforementioned assumptions on the underlying physical theory.
To prove our result we need to restrict to a subclass of networks first introduced in~\cite{renou2020network}.

\begin{definition}[No double common-source network] \label{def:NDCSnetwork}
A network $\cN$ is called a No Double Common-Source (NDCS) network if for any two parties $A_i\neq A_j$, there are \emph{at most one} source $S_\alpha$ adjacent to both of them.
\end{definition}

Observe that a network in which all sources are bipartite (i.e., each source is adjacent to exactly two parties) are automatically NDCS.\footnote{Recall that we assume there is no redundant source in the network.}
We can now formally state the main result of this paper.

\begin{theorem}[Main result]\label{thm:main-result}
Let $\cN$ be an NDCS network with sources $S_\alpha$, $\alpha=1, \dots, m$, and parties $A_i$, $i=1, \dots, n$. Consider signals emitted by sources and measurements performed by the parties in an arbitrary underlying physical theory that satisfies the no-signaling and independence principles. Suppose that this results in an output joint distribution $p(a_1,\dots , a_n)$. 
Let $f_i(a_i)$ be a function of the output of party $A_i$. Then the covariance matrix $\cC(f_1, \dots, f_n)$ admits an $\cN$-compatible matrix decomposition. 
\end{theorem}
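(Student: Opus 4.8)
The plan is to use the inflation technique together with the two causality laws. First I would build, for the NDCS network $\cN$, an enlarged ``inflated'' network $\widetilde{\cN}$ as follows: replace each party $A_i$ by two independent copies $A_i^{(0)}$ and $A_i^{(1)}$, and replace each source $S_\alpha$ by copies indexed by the binary choices made at its neighbours. Concretely, for a source $S_\alpha$ adjacent to parties $A_{i_1},\dots,A_{i_k}$, one introduces $2^k$ copies $S_\alpha^{(\vec s)}$, $\vec s\in\{0,1\}^k$, and wires $S_\alpha^{(\vec s)}$ to $A_{i_t}^{(s_t)}$ for each $t$. By the independence assumption these copies are legitimate independent sources emitting the same kind of signal, and the copies of the parties run the same measurement; by the first causality law any marginal of a group of parties in $\widetilde{\cN}$ equals the corresponding marginal in $\cN$ whenever their local source-neighbourhoods match. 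The key structural point, which uses the NDCS hypothesis, is that in $\widetilde{\cN}$ the pair $A_i^{(b)}, A_j^{(b')}$ shares a source if and only if there is a (then unique) common source $S_\alpha$ of $A_i,A_j$ in $\cN$ \emph{and} the choices $b,b'$ are consistent with a single copy $S_\alpha^{(\vec s)}$; otherwise they share no source and the second causality law forces their joint distribution to factorize.

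Next I would extract the covariance constraint. Consider the $2n$ random variables $f_i(a_i^{(b)})$ for $i\in[n]$, $b\in\{0,1\}$, with respect to the output distribution of $\widetilde{\cN}$. Their full $2n\times 2n$ covariance matrix $\widetilde{\cC}$ is positive semidefinite. Using the first causality law, the ``diagonal blocks'' and the within-same-copy entries reproduce the original $\cC$: e.g. $\Cov\big(f_i(a_i^{(b)}), f_j(a_j^{(b)})\big) = \cC_{ij}$ for all $b$, since $A_i^{(b)}$ and $A_j^{(b)}$ together see exactly the copies of the sources that $A_i,A_j$ see in $\cN$. For the ``cross'' entries with $b\neq b'$, by the NDCS-based observation above, $\Cov\big(f_i(a_i^{(b)}), f_j(a_j^{(b')})\big)$ is zero unless $A_i,A_j$ have a common source $S_\alpha$ in $\cN$; and when they do, a further inflation (splitting that source but not others, or choosing the wiring so that $A_i^{(b)}$ and $A_j^{(b')}$ are joined precisely through $S_\alpha^{(\vec s)}$ while being disconnected through every other source) lets one isolate the contribution of $S_\alpha$. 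The idea is that by comparing the covariance of the inflated network with that of $\cN$ and invoking positivity of $\widetilde{\cC}$ at various levels of inflation, the ``source-$\alpha$ part'' of $\cC$ is forced to be a well-defined PSD matrix $M_\alpha$ supported on the parties adjacent to $S_\alpha$.

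To carry this out cleanly I would iterate: build a hierarchy of inflations $\widetilde{\cN}^{(r)}$ where each source is split into $r$ (rather than $2$) independent copies and each party into $r$ copies, wired cyclically or via a combinatorial design so that the pair of copies $(A_i^{(b)},A_j^{(c)})$ is connected through $S_\alpha$ for exactly one common source at a time. Writing out the $rn\times rn$ covariance matrix and using that it is a principal-submatrix-structured object, one recognizes it (after the causality substitutions) as $\sum_\alpha (\text{pattern for }\alpha)\otimes(\text{some PSD Gram matrix})$, up to the diagonal. Taking $r\to\infty$ and using that a sum of matrices each of a fixed support pattern which is PSD at every truncation must decompose accordingly — i.e. a Schur-complement / averaging argument extracting $M_\alpha := \lim_r \tfrac1{?}(\cdots)$ — yields $\cC = \sum_\alpha M_\alpha$ with $M_\alpha\in\cL_\alpha$. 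The averaging that produces each $M_\alpha$ as a genuine limit, and checking it stays PSD, is the technical core.

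\textbf{Main obstacle.} The hard part will be the final extraction step: given only that each finite inflation's covariance matrix is PSD with a prescribed (NDCS-controlled) zero pattern and with diagonal/same-copy blocks pinned to $\cC$, one must produce the individual PSD summands $M_\alpha$. This is where the NDCS hypothesis is essential — it guarantees that a cross-pair is ``explained'' by at most one source, so the off-diagonal entries of $\cC$ split unambiguously among the $M_\alpha$; without it the assignment of $\cC_{ij}$ to a source would be ill-posed. I expect this to require a careful limiting/averaging argument (possibly a compactness argument on the bounded set of candidate decompositions, or an explicit formula for $M_\alpha$ in terms of conditional covariances in a suitable inflation), and verifying positivity of the limit is the crux.
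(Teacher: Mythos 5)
There are two genuine problems with your proposal. First, the inflation you construct is not a legitimate non-fanout inflation. If a source $S_\alpha$ adjacent to $k$ parties is split into $2^k$ copies $S_\alpha^{(\vec s)}$ while each party only has $2$ copies, then the copy $A_{i_t}^{(b)}$ is adjacent to all $2^{k-1}$ source copies with $s_t=b$. Its local neighbourhood is then not isomorphic to that of $A_{i_t}$ in $\cN$, so it cannot ``run the same measurement'' and the first causality law no longer pins its marginals (in particular $\Cov\big(f_i(a_i^{(b)}),f_j(a_j^{(b)})\big)$) to those of $\cN$. The inflations the argument needs use exactly $d$ copies of \emph{every} source and \emph{every} party, with the copies of $S_\alpha$ matched to the copies of each adjacent $A_i$ by a permutation $\pi_i^\alpha$, so that every copy sees exactly one copy of each of its original sources (Subsection~\ref{subsec:inflation}).

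Second, and more seriously, the step you yourself flag as ``the technical core'' --- extracting the individual PSD summands $M_\alpha$ from positivity of the inflated covariance matrices by some limiting or averaging procedure --- is the entire content of the theorem, and no mechanism for it is given. There is no known way to read off the $M_\alpha$ directly from finite inflations; the paper never constructs them. Instead it dualizes: letting $\cL=\sum_\alpha\cL_\alpha$ be the cone of matrices admitting an $\cN$-compatible decomposition, it suffices by $(\cL^*)^*=\cL$ to show $\tr(\cC W)\ge 0$ for all $W\in\cL^*$. The inflation argument (Lemma~\ref{lem:gen-inflation}) delivers exactly that $\cC\circ W\succeq 0$ for every ``twisted Gram matrix'' $W$ built from vectors $\ket{\psi_i}$ and the permutations $\pi_i^\alpha$, whence $\tr(\cC W)=\bra{J}\cC\circ W\ket{J}\ge 0$. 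The missing ingredient your plan has no substitute for is Lemma~\ref{lem:dual-cone-description}, which identifies $\cL^*$ with the closure of the set of twisted Gram matrices; its hard inclusion $\cL^*\subseteq\overline{\cW}$ requires the embezzlement construction of van Dam--Hayden to approximate arbitrary Gram vectors of the $\alpha$-blocks by permutations acting on a single universal vector. (As a side remark, in an NDCS network the assignment of the off-diagonal entry $\cC_{ij}$ to its unique common source is automatic; the true difficulty is splitting the diagonal entries $\Var(f_i)$ among the sources adjacent to $A_i$ so that every block is PSD, which is precisely what the duality argument handles and what your sketch leaves open.)
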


In~\cite{kela2019semidefinite} and~\cite{aaberg2020semidefinite} the above theorem is proven for the classical and quantum theories without the NDCS  assumption and for \emph{vector-valued} functions beyond the scalar ones.  Nevertheless, Theorem~\ref{thm:main-result} states the validity of covariance decomposition in the \emph{box world}~\cite{Skrzypczyk2009couplers,Short2010couplers} as well as any GPT~\cite{Barrett2007GPT,Janotta2013norestriction}.

Remark that this theorem is \emph{tight}, in the sense that given an NDCS network $\cN$ and a positive semidefinite matrix $M$ that admits an $\cN$-compatible matrix decomposition, there exists an output distribution of the network whose covariance matrix is $M$. Indeed, letting $M=\sum_\alpha M_\alpha$ be an $\cN$-compatible matrix decomposition of $M$, assume that source $\alpha$ sends a \emph{multivariate Gaussian distribution} with covariance matrix $M_\alpha$ to its adjacent parties (each coordinate to its associated party). Then, if each party outputs the summation of all received (one-dimensional) Gaussian signals, the joint output distribution would be a Gaussian distribution whose covariance matrix is $M$.

\paragraph{Proof ideas.}
Our main conceptual idea in proving Theorem~\ref{thm:main-result} is the so called \emph{inflation technique}~\cite{wolfe2019inflation} (see Subsection~\ref{subsec:inflation}). Using \emph{non-fanout} inflations, we consider certain expansions of the network $\cN$ by adding copies of the sources and parties and connecting them based on similar rules as in $\cN$. Then we consider the covariance matrix associated to the inflated network. As a covariance matrix, it is again positive semidefinite. This fact imposes some conditions on the original covariance matrix. Then we show that these conditions imply that the covariance matrix  admits an $\cN$-compatible matrix decomposition.

From a more technical point of view, we first note that, fixing a network $\cN$, the space of covariance matrices associated to $\cN$ forms a convex cone (see Definition~\ref{def:ConesAndDuals}).
On the other hand, the space of matrices that admit an $\cN$-compatible matrix decomposition is also a convex cone.
To prove Theorem~\ref{thm:main-result} we need to show that these two convex cones coincide.
To this end, based on the theory of \emph{dual cones} (see Subsection~\ref{subsecc:cones}), it suffices to show that the associated dual cones are equal.
Next, to prove equality in the dual picture, we use the inflation technique as discussed above.
Putting these together, the theorem is proven when $\cN$ contains only bipartite sources (see Theorem~\ref{thm:main-result_bipartite}).
For general NDCS networks we also need to use the theory of \emph{embezzlement}, first introduced in~\cite{van2003universal} in the context of entanglement theory (see Subsection~\ref{subsec:embezzlement}).

\section{Main tools}\label{sec:MainTools}

In this section we explain the main three tools we use in the proof of Theorem~\ref{thm:main-result}, namely, non-fanout inflations, the theory of dual cones and the theory of embezzlement.

\subsection{Non-fanout inflation}\label{subsec:inflation}

\begin{figure}
\begin{center}
\includegraphics[scale=0.3]{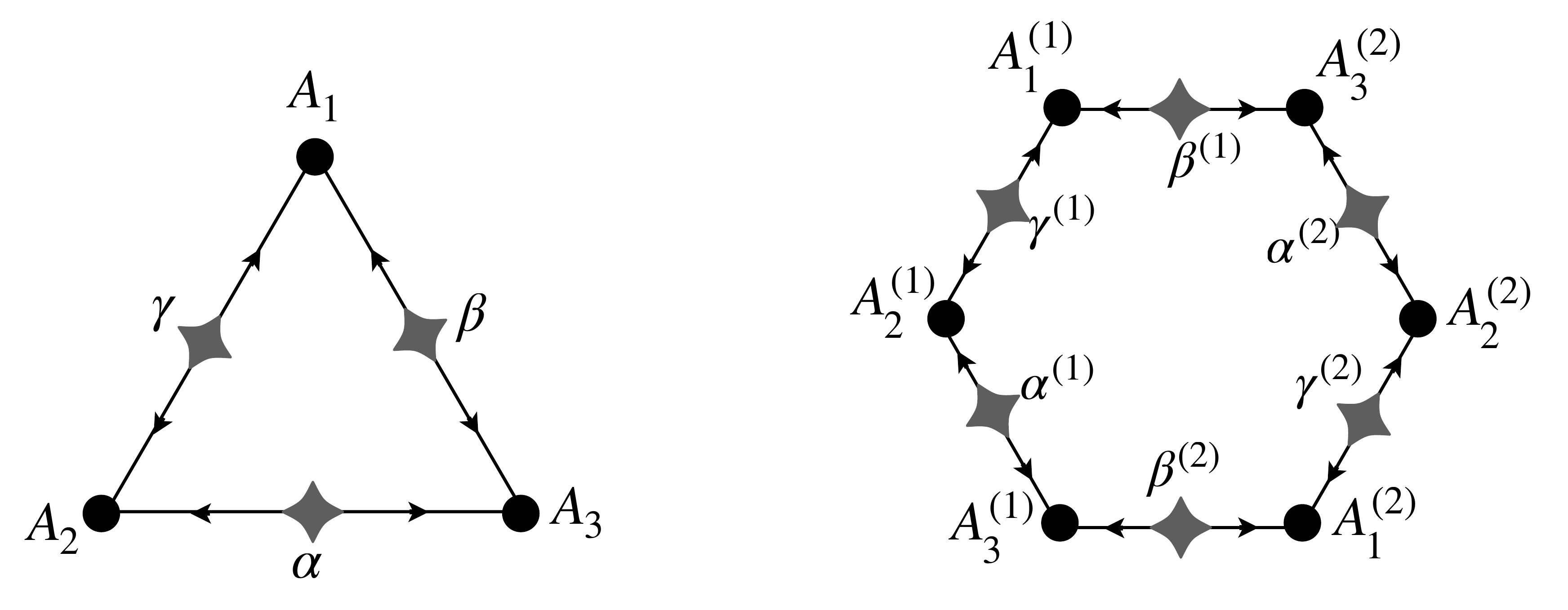}
\caption{\footnotesize The network on the right is a non-fanout inflation of the triangle network on the left. Here, parties $A_i^{(1)}, A_i^{(2)}$ are two copies of $A_i$, for $i=1, 2, 3$. Moreover, sources $\alpha^{(1)}, \alpha^{(2)}$ are two independent copies of $\alpha$, and similarly for $\beta, \gamma$. We note that, e.g., the party $A_1^{(1)}$ is adjacent to 
copies $\beta^{(1)}, \gamma^{(1)}$ of $\beta, \gamma$, respectively, that are adjacent to $A_1$ in the original network. Thus, $A_1^{(1)}$ receives signals of the same types as the ones received by $A_1$ in the triangle, and can behave exactly the same as $A_1$; she can perform the same measurement on the received signals. 
According to the setup of the proof of Lemma~\ref{lem:CovInflationCondition_BipartiteSources}, this inflation corresponds to the sign function $\epsilon(\alpha) =\epsilon(\gamma)=+1$ and $\epsilon(\beta)=-1$.
}
\label{fig:inflation}
\end{center}
\end{figure}

The inflation technique~\cite{wolfe2019inflation} is a method to obtain constraints over the output correlations produced in a network $\cN$.
Here, we give a formal definition of a restricted class of inflations called \emph{non-fanout inflations}. 

Consider a network $\cN$ with sources $S_\alpha$, $\alpha=1, \dots, m$, and parties $A_i$, $i=1, \dots, n$ outputting $a_1, \dots, a_n$ with joint distribution $p(a_1, \dots, a_n)$.
In the inflated network of order $d\geq 2$, which we denote by $\widetilde \cN$, we consider $d$ copies of each source $S_\alpha$ which we denote by $S_\alpha^{(1)}, \dots, S_\alpha^{(d)}$, and $d$ copies of each party $A_i$ which we denote by $A_i^{(1)}, \dots, A_i^{(d)}$.
Let $S_\alpha$ and $A_i$ be an adjacent source and party in $\cN$.
We assume that their associated copies are connected in $\widetilde \cN$ as follows.
Fix some permutation $\pi_i^\alpha$ on $\{1, \dots, d\}$.
Then we assume that $S_\alpha^{(k)}$ is adjacent to $A_i^{(\pi_i^\alpha)^{-1}(k)}$ for any $1\leq k\leq d$. Observe that in $\widetilde \cN$, any party $A_i^{(k)}$ is adjacent to the same number of sources and  of the same types as in $\cN$.
Similarly, any source $S_\alpha^{(k)}$ in $\widetilde \cN$ is adjacent to the same number of parties and of the same types as in $\cN$. Therefore, sources and parties in $\widetilde \cN$ can behave similarly to the sources and parties in the original network by producing the same signals and performing the same measurements. The output of $A_i^{(k)}$ is denoted by $a_i^{(k)}$, and the joint distribution of the outputs is denoted by 
\begin{align}\label{eq:inflated-dist}
p\Big(a_1^{(1)}, \dots, a_1^{(d)}, a_2^{(1)}, \dots, a_2^{(d)}, \dots\dots, a_n^{(1)}, \dots, a_n^{(d)}\Big).
\end{align}
For an example of a non-fanout inflation see Figure~\ref{fig:inflation}.

In this paper, we assume that the underlying physical theory satisfies causality in the network, which is defined through the following two laws:

\paragraph{First causality law.} The marginal distribution of the outputs of a group of parties depends only on the structure of the network near those parties. 
For instance, suppose that $A_i^{(k)}, A_j^{(k')}$ share a source $S_\alpha^{(\ell)}$ in $\widetilde\cN$ and that $S_{\alpha}^{(\ell)}$ is the unique such source.
This means that $S_\alpha \to A_i, A_j$ in $\cN$.
Then the first principle states that the marginal distribution of outputs of $A_i^{(k)}, A_j^{(k')}$ in $\widetilde \cN$ coincides with that of $A_i, A_j$ in $\cN$, i.e., $p\big(a_i^{(k)}, a_j^{(k')}\big) = p(a_i, a_j)$. The point is that the correlation between parties $A_i^{(k)}, A_j^{(k')}$ in $\widetilde \cN$ are created in the very same way as the one between parties $A_i, A_j$ in $\cN$.

As a consequence of this principle, if $A_i, A_j$ compute functions $f_i, f_j$ on their outputs, and similarly $A_i^{(k)}, A_j^{(k')}$ compute identical functions  $f_i^{(k)}, f_j^{(k')}$, respectively, then we have $\Cov(f_i^{(k)}, f_j^{(k')})= \Cov(f_i, f_j)$. 

\paragraph{Second causality law.}
If two groups of parties do not share any common source, then their marginal distribution factorizes.  
For instance, if $A_i^{(k)}, A_j^{(k')}$ do not share any source in $\widetilde\cN$ (which may be the case even if $A_i, A_j$ share a source in $\cN$), then the marginal distribution $p\big(a_i^{(k)}, a_j^{(k')}\big)$ factorizes as $p\big(a_i^{(k)}, a_j^{(k')}\big) = p\big(a_i^{(k)}\big)p\big(a_j^{(k')}\big) = p(a_i)p(a_j)$, where the second equality follows from the first principle. 
This, in particular, means that if $f_i^{(k)}$ and $f_j^{(k')}$ are functions of the outputs of $A_i^{(k)}, A_j^{(k')}$, then we have $\Cov(f_i^{(k)}, f_k^{(k')})=0$. 

\medskip
Importantly, these two laws and inflation are more than a method to find limitations on correlations. They provide the \emph{definition} of the ultimate limits of correlations in networks, hence should be satisfied by any GPT in network (see \cite{coiteuxToBePublished,pironioToBePublished}). 
We emphasize that we only consider non-fanout inflations meaning that we do not clone signals. In the classical theory signals can be cloned, so, e.g.,  a bipartite source in $\cN$ may become tripartite in $\widetilde \cN$. Indeed, the inflation technique with unbounded fanout exactly characterizes distributions in classical networks~\cite{navascues2020inflation}. Here, as we consider arbitrary theories, we restrict only to non-fanout inflations. Such inflations has also been considered in~\cite{gisin2020constraints}.

\subsection{Cones and duality}\label{subsecc:cones}

One of the main tools that we use to prove our main theorem is the theory of dual cones~\cite{dattorro2010convex}. Here, restricting to cones of real or complex \emph{self-adjoint (Hermitian)} matrices, we review the basic definitions and properties. 
Let us first introduce the \emph{Hilbert-Schmidt inner product} on the space of matrices given by
$\langle X, Y\rangle := \tr(X^{\dagger} Y),$
where $X^{\dagger}$ is the adjoint of $X$.
Cones and dual cones are defined as follows:
\begin{definition}\label{def:ConesAndDuals}
A subset $\cK$ of self-adjoint $n\times n$ matrices  is called a \emph{convex cone} if 
\begin{itemize}
\item For $X\in \cK$ and $r\geq 0$ we have $rX\in \cK$,
\item For any $X_1, X_2\in \cK$ we have $X_1+X_2\in \cK$.
\end{itemize}
Moreover, the \emph{dual} cone of $\cK$ is defined by $\cK^*:= \{Y\,|\, \langle X, Y\rangle\geq 0,~~\forall X\in \cK \}.$
\end{definition}

An important example of a convex cone is the set of positive semidefinite matrices, that is self-dual.

In the proof of Theorem~\ref{thm:main-result} we will use the following basic properties of dual cones, whose proofs are left for Appendix~\ref{app:cones}.

\begin{lemma}\emph{\cite{dattorro2010convex} }\label{lem:cones}
The followings hold for both the real and complex scalar fields:
\begin{itemize}
\item[{\rm(i)}] For any cone $\cK$, we have $\cK^* = (\,\overline{\cK}\,)^*$ where $\overline{\cK}$ is the closure of $\cK$. Moreover, the dual cone $\cK^*$ is convex and closed.
\item[{\rm(ii)}] For any two cones $\cL\subseteq \cK$ we have $\cK^*\subseteq \cL^*$.
\item[{\rm(iii)}] For any closed convex cone $\cK$ we have $(\cK^*)^* = \cK$. 
\item[{\rm(iv)}] For any closed convex cones $\cK_1, \cK_2$, we have that $\cK_1\cap \cK_2$ is a closed convex cone.
\item[{\rm(v)}] For any closed convex cones $\cK_1, \cK_2$ , their sum $\cK_1+ \cK_2 = \big\{X_1+ X_2|\, ~ X_1\in \cK_1, X_2\in \cK_2 \big\}$ is a convex cone (but not necessarily closed).
\item[{\rm(vi)}] For any closed convex cones $\cK_1, \cK_2$ we have $(\cK_1\cap \cK_2)^* = \overline{\cK_1^* + \cK_2^*}$.
\item[{\rm(vii)}] For any closed convex cones $\cK_1, \cK_2$ we have $(\cK_1 + \cK_2)^* = \cK_1^* \cap \cK_2^*$. 

\end{itemize}
\end{lemma}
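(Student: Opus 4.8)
The plan is to establish the seven items in the logical order (ii), (i), (iii), (iv), (v), (vii), (vi), since everything after the biduality statement (iii) follows from it by formal manipulation. Throughout I would work in the real Euclidean space of $n\times n$ self-adjoint matrices: for self-adjoint $X,Y$ the quantity $\langle X,Y\rangle=\tr(X^\dagger Y)=\tr(XY)$ is real, because $\overline{\tr(XY)}=\tr(XY)$ when $X,Y$ are Hermitian. Hence the Hermitian matrices form a genuine finite-dimensional real inner product space, and the complex scalar field introduces no difficulty beyond the real one; all the usual convex-geometry tools apply verbatim.

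First the bookkeeping items. Item (ii) is immediate: enlarging the constraint set can only shrink the dual, so $\cL\subseteq\cK$ forces $\cK^*\subseteq\cL^*$. For (i) I would write $\cK^*=\bigcap_{X\in\cK}\{Y:\langle X,Y\rangle\ge 0\}$ as an intersection of closed half-spaces through the origin; each such half-space is a closed convex cone, and an arbitrary intersection of closed convex cones is again one, which yields closedness, convexity, and the cone property of $\cK^*$. The identity $\cK^*=(\overline\cK)^*$ then follows from continuity of $Y\mapsto\langle X,Y\rangle$: the inclusion $\supseteq$ is (ii), while for $\subseteq$ any $X\in\overline\cK$ is a limit of $X_n\in\cK$, so $\langle X_n,Y\rangle\ge 0$ passes to the limit. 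Items (iv) and (v) are similarly routine: for (iv), the two cone axioms are checked elementwise and an intersection of closed convex sets is closed and convex; for (v), $\cK_1+\cK_2$ is a cone since $r(X_1+X_2)=(rX_1)+(rX_2)$ and a sum of two of its elements again splits as a single sum, and a cone closed under addition is automatically convex (writing $tX+(1-t)Y$ as a sum of the scaled members $tX$ and $(1-t)Y$). The parenthetical failure of closedness in (v) is a standard phenomenon and requires no proof here.

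The technical core is (iii), and the separating hyperplane theorem is the one place where genuine work (and the hypotheses of closedness and convexity of $\cK$) enters. The inclusion $\cK\subseteq(\cK^*)^*$ holds by the very definition of the dual cone. For the reverse, given $X_0\notin\cK$ with $\cK$ closed and convex, I would separate $X_0$ from $\cK$: there exist $Y_0$ and a scalar $c$ with $\langle Y_0,X\rangle\ge c$ for all $X\in\cK$ and $\langle Y_0,X_0\rangle<c$. Since $0\in\cK$ we get $c\le 0$; and if $\langle Y_0,X\rangle<0$ for some $X\in\cK$, then scaling $X$ by $r\to\infty$ would drive $\langle Y_0,rX\rangle\to-\infty$, contradicting the lower bound, so in fact $\langle Y_0,X\rangle\ge 0$ for all $X\in\cK$, i.e.\ $Y_0\in\cK^*$. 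Combined with $\langle Y_0,X_0\rangle<c\le 0$, this exhibits a member of $\cK^*$ against which $X_0$ tests negative, so $X_0\notin(\cK^*)^*$, proving $(\cK^*)^*\subseteq\cK$. I expect this separation step to be the main (indeed the only nontrivial) obstacle.

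Finally the two distributive laws. Item (vii) is direct from the definition: using $0\in\cK_1,\cK_2$, testing $Y$ against $X_1+0$ and $0+X_2$ gives $(\cK_1+\cK_2)^*\subseteq\cK_1^*\cap\cK_2^*$, and additivity of the inner product gives the reverse inclusion. For (vi) I would avoid a fresh separation argument and instead dualize (vii). Applying (vii) to the closed convex cones $\cK_1^*,\cK_2^*$ yields $(\cK_1^*+\cK_2^*)^*=(\cK_1^*)^*\cap(\cK_2^*)^*=\cK_1\cap\cK_2$, where the last equality is (iii). Taking duals once more and using the identity $(\cM^*)^*=\overline\cM$ valid for any convex cone $\cM$—which follows by combining (i) with (iii) applied to the closed convex cone $\overline\cM$—specialized to $\cM=\cK_1^*+\cK_2^*$, I obtain $\overline{\cK_1^*+\cK_2^*}=(\cK_1\cap\cK_2)^*$, which is precisely (vi).
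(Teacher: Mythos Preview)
Your proof is correct and, for items (i)--(v) and especially the core biduality (iii), essentially identical to the paper's: both invoke the separating hyperplane / Hahn--Banach theorem to produce a $Y_0\in\cK^*$ witnessing $X_0\notin(\cK^*)^*$, with the same ``scale $X$ to force $\langle Y_0,X\rangle\ge 0$'' argument.

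The one genuine difference is the treatment of (vi) and (vii). The paper proves (vi) first, via a \emph{second} direct separation argument applied to $\overline{\cK_1^*+\cK_2^*}$, and then obtains (vii) by dualizing (vi) and using (i), (iii). You invert this: you establish (vii) straight from the definition (using $0\in\cK_i$ and linearity of the pairing), then dualize to get (vi) via the identity $(\cM^*)^*=\overline{\cM}$ for convex cones. Your route is slightly more economical---it invokes separation only once, in (iii)---while the paper's route has the virtue of making (vi) self-contained without needing to first justify $(\cM^*)^*=\overline{\cM}$ for non-closed $\cM$. Both are standard and valid.
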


\subsection{Embezzlement}\label{subsec:embezzlement}

\emph{Embezzlement} is another pivotal idea in our proof of Theorem~\ref{thm:main-result}. Entanglement embezzlement is a compelling notion in quantum information theory first introduced in~\cite{van2003universal}. In the following we present the construction of the universal embezzling family of this paper, but without referring to entanglement.  
For any integer $R$ define
\begin{align}\label{eq:def-mu-R}
\ket{\mu_R} := \frac{1}{\sqrt{\chi_R}} \sum_{r=1}^R \frac{1}{\sqrt r}\ket r,
\end{align}
where $\chi_R = \sum_{r=1}^R r^{-1}$ is the Harmonic number
and a normalization factor. 

\begin{lemma} {\rm \cite{van2003universal}} \label{lem:embezzlement}
For any $\varepsilon>0$, integer $d$ and sufficiently large $R$ the following holds. 
Let $\ket{\phi} =\sum_{j=1}^d c_j \ket j$ be a \emph{normal} $d$-dimensional vector satisfying $c_j\geq 0$ for all $j$. Then there exists a permutation $\pi: [dR]\to[dR]$ such that 
$$\bra{\mu_R} P_\pi  \,(\ket{\phi}\otimes \ket{\mu_R}) \geq 1-\eps.$$
Here we identified $[d]\times[R]$ with $[dR]$, e.g., via $(j, r)\mapsto (j-1)R+r$, and $P_{\pi}$ 
is  the permutation matrix associated with $\pi$ given by $P_{\pi}\ket{j}\otimes \ket{r} = \ket{\pi((j-1)R+r)}$.  Note that by abuse of notation we consider $\ket{\mu_R}$ as a vector in both $\mathbb C^{R}$ and $\mathbb C^{dR}$ (by padding it with zero coordinates).
\end{lemma}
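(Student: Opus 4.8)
\textbf{Proof plan for Lemma~\ref{lem:embezzlement}.}

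The plan is to follow the classical embezzlement argument of van~Dam and Hayden, rephrased in the vector (rather than entanglement) language of this paper. The key observation is that the ``ladder state'' $\ket{\mu_R}$ has amplitudes proportional to $1/\sqrt r$, so that multiplying its index by a fixed factor only shifts the distribution logarithmically; this is what makes it nearly invariant under the adjunction of a small system. Concretely, the permutation $\pi$ I would use is (an extension to a bijection of) the map that sends the block structure $[d]\times[R]$ to $[dR]$ in the ``interleaved'' way: informally, $\ket j\otimes\ket r$ should be routed to the index $\approx d(r-1)+j$ inside $\{1,\dots,dR\}$, i.e.\ so that the product state $\ket\phi\otimes\ket{\mu_R}$ is mapped close to $\ket{\mu_{dR}}$, and then one compares $\ket{\mu_{dR}}$ with the padded $\ket{\mu_R}$.

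The steps, in order, are as follows. First, I would record the elementary estimate on Harmonic numbers: $\chi_{dR}=\ln(dR)+\gamma+o(1)=\chi_R+\ln d+o(1)$, so that $\chi_R/\chi_{dR}\to 1$ as $R\to\infty$ for fixed $d$. Second, since $\ket{\mu_R}$ viewed in $\mathbb C^{dR}$ is supported on the first $R$ coordinates, a direct computation gives
\[
\bra{\mu_R}\mu_{dR}\rangle \;=\; \frac{1}{\sqrt{\chi_R\,\chi_{dR}}}\sum_{r=1}^R \frac1r \;=\; \sqrt{\frac{\chi_R}{\chi_{dR}}}\;\xrightarrow[R\to\infty]{}\;1,
\]
so for $R$ large this overlap is at least $1-\eps/2$. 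Third, I would choose $\pi$ so that $P_\pi(\ket\phi\otimes\ket{\mu_R})$ is as close as possible to $\ket{\mu_{dR}}$: writing $\ket\phi\otimes\ket{\mu_R}=\frac{1}{\sqrt{\chi_R}}\sum_{j,r}\frac{c_j}{\sqrt r}\ket j\ket r$, the amplitudes (squared) are $c_j^2/(\chi_R r)$, which we want to match against $\ket{\mu_{dR}}$'s squared amplitudes $1/(\chi_{dR} s)$, $s\in[dR]$. The natural bijection is to sort the $dR$ pairs $(j,r)$ by the value $r/c_j^2$ in increasing order and let $\pi$ send the $k$-th pair to index $k$; since $c_j^2\le 1$, the pair with the largest amplitude maps near $s=1$, and the resulting distribution interlaces the target distribution closely. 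I would then bound the overlap $\bra{\mu_{dR}}P_\pi(\ket\phi\otimes\ket{\mu_R})$ from below by $1-\eps/2$ using the fact that both distributions are (up to the normalisation ratio, which tends to $1$) comparable term-by-term — this is where the $1/\sqrt r$ profile is essential, because sorting by $r/c_j^2$ perturbs each index by at most a multiplicative factor $d$, hence each amplitude by a factor $\sqrt{d}$-bounded ratio that is absorbed as $R\to\infty$. Finally, combining the two overlap estimates via the triangle inequality in $\mathbb C^{dR}$, $\|\ket u-\ket v\|\le\|\ket u-\ket w\|+\|\ket w-\ket v\|$ with $\ket u=P_\pi(\ket\phi\otimes\ket{\mu_R})$, $\ket w=\ket{\mu_{dR}}$, $\ket v=\ket{\mu_R}$, yields $\bra{\mu_R}P_\pi(\ket\phi\otimes\ket{\mu_R})\ge 1-\eps$ for $R$ sufficiently large.

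The main obstacle is the third step: making precise the claim that sorting the pairs $(j,r)$ by $r/c_j^2$ produces a unit vector with overlap close to $1$ with $\ket{\mu_{dR}}$, \emph{uniformly} over all valid $\ket\phi$. The subtlety is that the $c_j$ can be extremely small (even zero for some $j$), so the sorted index of a pair $(j,r)$ is not simply $\approx d(r-1)+j$; one must instead argue via a monotone rearrangement / majorization inequality showing that the inner product $\sum_{s}\sqrt{p_s q_s}$ between the rearranged distribution $p$ and the target $q_s\propto 1/s$ is maximized or near-maximized by this greedy sorting, and that the resulting value is $\ge 1-\eps/2$ once $\chi_R/\chi_{dR}>1-\eps/4$. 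I expect this to reduce, after the harmonic-number asymptotics are inserted, to the inequality that for any probability vector arising this way the ``logarithmic spread'' changes by at most $\log d$, which is negligible compared to $\log R$; formalizing this cleanly — rather than via the original entanglement-theoretic monotonicity of the embezzlement protocol — is the part that requires care. Since the statement is quoted from~\cite{van2003universal}, an acceptable alternative is to invoke that reference directly and only verify the dictionary between the entanglement formulation there and the vector formulation here.
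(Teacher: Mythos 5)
Your overall strategy is on the right track: the permutation you propose (sorting the pairs $(j,r)$ by $r/c_j^2$, i.e., sorting the amplitudes $c_j/\sqrt{r\chi_R}$ in decreasing order) is exactly the permutation used in the paper's proof, and your computation $\bra{\mu_R}\mu_{dR}\rangle=\sqrt{\chi_R/\chi_{dR}}\to 1$ is correct. However, the step you yourself flag as the main obstacle is precisely the mathematical content of the lemma, and it is left unproved. Moreover, the heuristic you offer for it --- that sorting ``perturbs each index by at most a multiplicative factor $d$'' --- is false in general: the pair $(j,r)$ is preceded in the sorted order by all pairs $(j',r')$ with $r'\le r\,c_{j'}^2/c_j^2$, so it can be displaced to a position as late as roughly $r/c_j^2$, which is unbounded in terms of $d$ when $c_j$ is small. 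What saves the argument is not a bound on the displacement of indices but a bound on the sorted amplitudes themselves.

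The paper closes this gap with a short counting argument: for each threshold $t$, the number of pairs $(j,r)$ whose amplitude exceeds $1/\sqrt{t\chi_R}$ is $\sum_j|\{r:\, r< c_j^2 t\}| < t\sum_j c_j^2 = t$, hence the $t$-th largest amplitude satisfies $\tilde\lambda_t\le 1/\sqrt{t\chi_R}$ for every $t$. With this bound in hand the paper does not even need your intermediate state $\ket{\mu_{dR}}$: it estimates the overlap with $\ket{\mu_R}$ directly via $\bra{\mu_R}\tilde\lambda_R\rangle=\sum_{r=1}^R\tilde\lambda_r/\sqrt{r\chi_R}\ge\sum_{r=1}^R\tilde\lambda_r^2$, and then lower-bounds the sum of the $R$ largest squared amplitudes by keeping the top $\lfloor R/d\rfloor$ values for each $j$, which yields $\chi_{\lfloor R/d\rfloor}/\chi_R\ge(\ln R-\ln d)/(\ln R+1)\to 1$. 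Your detour through $\ket{\mu_{dR}}$ and the triangle inequality also works once $\tilde\lambda_t\le 1/\sqrt{t\chi_R}$ is established, since it gives $\bra{\mu_{dR}}\tilde\lambda_R\rangle\ge\sqrt{\chi_R/\chi_{dR}}\,\sum_s\tilde\lambda_s^2=\sqrt{\chi_R/\chi_{dR}}$, but it is an unnecessary extra step. As written, the proposal identifies the right construction but omits the one estimate that makes it work; citing~\cite{van2003universal} would of course be acceptable, but is not a proof.
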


For the sake of completeness, we give the proof of this lemma in Appendix~\ref{app:embezzlement}.

We now generalize the above lemma to the case where the coordinates of $\ket{\phi}$ are not necessarily non-negative. For this generalization we will use another class of states. For an integer $T$, let $\omega=e^{2\pi i/T}$ be a $T$-th root of unity and define
\begin{align}\label{eq:def-vartheta-T}
\ket{\vartheta_T}=\frac{1}{\sqrt T} \sum_{t=1}^T \omega^t \ket t.
\end{align}

\begin{lemma}\label{lem:embezzlement-gen}
For any $\varepsilon>0$, integer $d$ and sufficiently large $R, T$ the following holds.
Let $\ket{\phi} =\sum_{j=1}^d c_j \ket j$ be an arbitrary \emph{normal} $d$-dimensional vector. Then there exist a permutation $\pi: [TdR]\to[TdR]$ such that 
$$(\bra{\vartheta_T}\otimes \bra{\mu_R}) P_\pi  \,(\ket{\vartheta_T}\otimes\ket{\phi}\otimes \ket{\mu_R}) \geq 1-\eps.$$
Here as in the previous lemma we identify $[T]\times[d]\times [R]$ with $[TdR]$ via a fixed map, and by abuse of notation we consider $\ket{\vartheta_T}\otimes \ket{\mu_R}$ as a vector in both $\mathbb C^{T}\otimes \mathbb C^R$ and $\mathbb C^{TdR}$ (by padding it with zero coordinates).

\end{lemma}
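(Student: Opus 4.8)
The plan is to reduce the general statement (arbitrary complex/negative coordinates) to the non-negative case already handled by Lemma~\ref{lem:embezzlement}, by using the ``phase-randomizing'' state $\ket{\vartheta_T}$ to absorb the phases of the $c_j$. Write $c_j = |c_j|\, e^{i\theta_j}$ and set $\ket{\phi_+} = \sum_j |c_j|\ket j$, which is a normal non-negative vector to which Lemma~\ref{lem:embezzlement} applies: for large $R$ there is a permutation $\sigma$ of $[dR]$ with $\bra{\mu_R} P_\sigma(\ket{\phi_+}\otimes\ket{\mu_R}) \ge 1-\eps/2$. The first step is therefore to invoke Lemma~\ref{lem:embezzlement} with accuracy $\eps/2$ on $\ket{\phi_+}$, obtaining $\sigma$ and the corresponding permutation matrix $P_\sigma$ on $\mathbb C^{dR}$.

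The second step is to handle the phases. The state $\ket{\vartheta_T} = T^{-1/2}\sum_t \omega^t\ket t$ has the property that cyclic shifts $t\mapsto t+k \bmod T$ multiply it by $\omega^{-k}$ (up to the $\mathbb C^T$-identification), so applying such a shift conditioned on the value of $j$ lets us inject a phase $\omega^{-k_j}$ for each coordinate $j$ of $\ket\phi$. Concretely, I would define a permutation $\tau$ on $[T]\times[d]$ (extended to $[T]\times[d]\times[R]$ by acting trivially on the $R$ factor) that, on the block with second coordinate $j$, cyclically shifts the $[T]$ index by $k_j$, where $k_j\in\{0,\dots,T-1\}$ is chosen so that $\omega^{k_j}$ approximates $e^{i\theta_j} = c_j/|c_j|$ (with $k_j$ arbitrary if $c_j=0$); since there are only finitely many $j$, for $T$ large enough $|\omega^{k_j} - e^{i\theta_j}|$ is uniformly as small as we like. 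Applying $P_\tau$ to $\ket{\vartheta_T}\otimes\ket{\phi_+}\otimes\ket{\mu_R}$ then yields a vector close (in norm, hence in inner product against the fixed unit vector on the left) to $\ket{\vartheta_T}\otimes\ket\phi\otimes\ket{\mu_R}$. I would then take $\pi$ to be the composition of $P_\tau$ with $(I_T\otimes P_\sigma)$ suitably interpreted on $[TdR]$, and estimate
$$(\bra{\vartheta_T}\otimes\bra{\mu_R})\,(I_T\otimes P_\sigma)\,(P_\tau)\,(\ket{\vartheta_T}\otimes\ket{\phi_+}\otimes\ket{\mu_R})$$
by first replacing $P_\tau(\ket{\vartheta_T}\otimes\ket{\phi_+}\otimes\ket{\mu_R})$ with $\ket{\vartheta_T}\otimes\ket\phi\otimes\ket{\mu_R}$ at a cost $O(\eps)$, then noting that $(I_T\otimes P_\sigma)$ acts as $P_\sigma$ on the $\mathbb C^{dR}$ part while $\bra{\vartheta_T}\ket{\vartheta_T}=1$ on the $\mathbb C^T$ part, leaving exactly $\bra{\mu_R}P_\sigma(\ket\phi\otimes\ket{\mu_R})$. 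The only subtlety is that $P_\sigma$ was constructed for $\ket{\phi_+}$, not $\ket\phi$; but $\ket\phi$ and $\ket{\phi_+}$ differ only by the diagonal phase operator that $P_\tau$ has already (approximately) implemented, so one more $O(\eps)$ triangle-inequality step gives $\bra{\mu_R}P_\sigma(\ket{\phi_+}\otimes\ket{\mu_R}) - O(\eps) \ge 1 - \eps$ after adjusting constants. Finally I would rename and re-index to present everything as a single permutation $\pi$ of $[TdR]$.

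The main obstacle I anticipate is purely bookkeeping: keeping the three tensor identifications $[T]\times[d]\times[R]\leftrightarrow[TdR]$ consistent while composing $P_\tau$ and $I_T\otimes P_\sigma$, since $\sigma$ lives on $[d]\times[R]$ and $\tau$ on $[T]\times[d]$, and ensuring the diagonal phase we create with $\tau$ is applied to the $j$-index \emph{before} $\sigma$ permutes it. One clean way to sidestep ordering issues is to first apply $I_T\otimes P_\sigma$ and then a phase-shift permutation adapted to the already-permuted coordinates — i.e. choose $\tau$ depending on $\sigma$ — so that the phases land correctly; either way the estimate is a two- or three-term triangle inequality once the combinatorics is fixed. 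No genuinely new analytic input beyond Lemma~\ref{lem:embezzlement} is needed; the role of $\ket{\vartheta_T}$ is exactly to be an (approximate) eigenvector of all cyclic shifts, which is what lets a permutation act like a diagonal unitary of arbitrary phases.
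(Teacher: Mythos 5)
Your proposal is correct and follows essentially the same route as the paper: the paper likewise uses a $j$-controlled cyclic shift on the $\ket{\vartheta_T}$ register (a common eigenvector of all shifts) to approximately convert between $\ket{\phi}$ and its non-negative version, and then invokes Lemma~\ref{lem:embezzlement}; the only cosmetic difference is that the paper strips the phases off $\ket{\phi}$ directly rather than adding them to $\ket{\phi_+}$ and inverting. Just fix the sign of the shift exponents (you need $\omega^{-k_j}\approx e^{i\theta_j}$, since a shift by $k$ multiplies $\ket{\vartheta_T}$ by $\omega^{-k}$) and compose $\tau^{-1}$, not $\tau$, into $\pi$ so that the vector being acted on is $\ket{\vartheta_T}\otimes\ket{\phi}\otimes\ket{\mu_R}$ as the lemma requires.
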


\begin{proof}
Let $c_j =  b_je^{2\pi i\theta_j} $ with $0\leq \theta_j<1$ and $b_j>0$. Then for any $j$ there exists an integer $ 0\leq r_j<T$ such that
\begin{align}\label{eq:approx-theta-j}
\big|\theta_j - \frac{r_j}{T}\big|\leq \frac{1}{T}.
\end{align}
Let $Q$ be the permutation matrix acting on $\mathbb C^T$ that shifts the basis vectors by 1:
$$Q\ket t = \ket{(t+1)\mod T}, \qquad 1\leq t\leq T.$$
Then we have $Q\ket{\vartheta_T} = \omega^{-1}\ket{\vartheta_T}$. Next, let
$$\widetilde Q = \sum_{j=1}^d Q^{r_j}\otimes \ket j\bra j.$$
Note that $\widetilde Q$ is a permutation matrix acting on $\mathbb C^{d}\otimes \mathbb C^{T}$. Then  we have
\begin{align*}
\widetilde Q \,\ket{\vartheta_T}\otimes \ket{ \phi} & = \sum_{j=1}^d  c_j Q^{r_j}\ket{\vartheta_T}\otimes \ket j\\
& =  \sum_{j=1}^d\omega^{-r_j} c_j \ket{\vartheta_T}\otimes \ket j\\
& =  \sum_{j=1}^d b_je^{2\pi i(\theta_i - r_j/T)} \ket{\vartheta_T}\otimes \ket j.
\end{align*}
Therefore, letting $\ket{\hat \phi} = \sum_{j=1}^d b_j \ket j$ and using~\eqref{eq:approx-theta-j}, for sufficiently large $T$, the vector  $\widetilde Q \,\ket{\vartheta_T}\otimes \ket{ \phi} $ is arbitrarily close to $\ket{\vartheta_T}\otimes \ket{\hat \phi}$. On the other hand, by Lemma~\ref{lem:embezzlement}, there is a permutation matrix $P$ acting on $\ket{\hat \phi}\otimes \ket{\mu_R}$ such that for sufficiently large $R$, the vector $P\ket{\hat \phi} \otimes \ket{\mu_R}$ is sufficiently close to $\ket{\mu_R}$. Putting these together, we find that for sufficiently large $R, T$, the vector $(I\otimes P)(\tilde Q\otimes I)\ket{\vartheta_T}\otimes \ket{\phi}\otimes \ket{\mu_R}$ is sufficiently close to $\ket{\vartheta_T}\otimes \ket{\mu_R}$. We are done since $(I\otimes P)(\tilde Q\otimes I)$ is a multiplication of permutation matrices, and is a permutation matrix itself.

\end{proof}

\section{Bipartite-source networks}\label{sec:BipartiteSourceNetworks}

In this section, as a warm up and to convey some of our ideas, we prove Theorem~\ref{thm:main-result} for networks all of whose sources are bipartite and the functions $f_i$ are real. Such a network can be visualized as a graph with nodes representing parties and edges representing sources. Thus for such networks, a source $S_{\alpha}$ adjacent to two parties $A_i, A_j$ is denoted by $S_{ij}$ and sometimes by $\alpha=(i, j)$.

In the following we restate Theorem~\ref{thm:main-result} for bipartite-source networks and real functions.

\begin{theorem}[Main result, bipartite-source networks]\label{thm:main-result_bipartite}
Let $\cN$ be a network with parties $A_i$, $i=1, \dots, n$ and with sources $S_\alpha$, $\alpha=1, \dots, m$, that are all bipartite. Consider signals emitted by sources and measurements performed by the parties in an arbitrary underlying physical theory that satisfies the no-signaling and independence principles. Suppose that this results in an output joint  distribution $p(a_1,\dots , a_n)$.  
Let $f_i(a_i)$ be a \emph{real} function of the output of party $A_i$. Then the covariance matrix $\cC(f_1, \dots, f_n)$ admits an $\cN$-compatible matrix decomposition. 
\end{theorem}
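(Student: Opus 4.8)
The plan is to pass to the dual-cone picture, as announced in the proof sketch of Theorem~\ref{thm:main-result}. Fix the bipartite-source network $\cN$. Let $\cK_{\cN}$ denote the set of $n\times n$ real symmetric matrices arising as covariance matrices $\cC(f_1,\dots,f_n)$ over all underlying theories, output distributions, and real functions $f_i$; closing it under nonnegative scaling and Minkowski sum (which is legitimate, since scaling $f_i$ by $\sqrt r$ and taking independent product experiments realizes these operations) makes $\cK_{\cN}$ a convex cone. Let $\cK_{\mathrm{dec}}$ be the cone of positive semidefinite matrices admitting an $\cN$-compatible matrix decomposition; concretely $\cK_{\mathrm{dec}} = \sum_\alpha \cL_\alpha$, a sum of closed convex cones, hence a convex cone. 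One always has $\cK_{\mathrm{dec}} \subseteq \cK_{\cN}$ by the tightness construction in the excerpt (Gaussian signals). To prove the reverse inclusion it suffices, by Lemma~\ref{lem:cones}(i) and (iii), to show $\cK_{\cN}^{*} \subseteq \cK_{\mathrm{dec}}^{*}$, i.e.\ that every $W$ dual to $\cK_{\mathrm{dec}}$ is also dual to $\cK_{\cN}$. By Lemma~\ref{lem:cones}(vii), $\cK_{\mathrm{dec}}^{*} = \bigcap_\alpha \cL_\alpha^{*}$, so I first need a usable description of $\cL_\alpha^{*}$: since $\cL_\alpha$ consists of PSD matrices supported on the $2\times 2$ principal submatrix indexed by the endpoints $\{i,j\}$ of the bipartite source $S_\alpha$, its dual $\cL_\alpha^{*}$ is the set of symmetric $W$ whose $\{i,j\}$-principal submatrix $\begin{pmatrix} W_{ii} & W_{ij} \\ W_{ji} & W_{jj}\end{pmatrix}$ is PSD (the off-support entries of $W$ are unconstrained by this one cone). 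Thus $W\in\cK_{\mathrm{dec}}^{*}$ iff for every source edge $(i,j)$ that principal $2\times2$ block of $W$ is PSD, i.e.\ $W_{ii},W_{jj}\ge 0$ and $W_{ij}^2 \le W_{ii}W_{jj}$.

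The heart of the argument is then: given such a $W$, show $\langle W, \cC\rangle = \sum_{i,j} W_{ij}\,\Cov(f_i,f_j) \ge 0$ for every covariance matrix $\cC$ realizable in $\cN$. This is where inflation enters. Consider the inflated network $\widetilde\cN$ of some order $d$, with the permutations $\pi_i^\alpha$ chosen so that certain pairs of copies $A_i^{(k)}, A_j^{(k')}$ share a (copy of a) source and others do not. On $\widetilde\cN$ we have a genuine covariance matrix $\widetilde\cC$ of the copied functions $f_i^{(k)}$, which is PSD. Now test $\widetilde\cC$ against a cleverly chosen rank-one (or low-rank) PSD matrix built from sign patterns: for a function $\epsilon$ assigning $\pm1$ to the sources, one lets copy indices be governed by $\epsilon$, and forms the vector $v$ with $v_{i,k} = $ an appropriate sign so that $0 \le v^{\mathsf T}\widetilde\cC\, v$ expands, via the two causality laws, into exactly $\sum_{i,j} W_{ij}\Cov(f_i,f_j)$ plus manifestly nonnegative diagonal contributions. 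Concretely: the first causality law gives $\Cov(f_i^{(k)},f_j^{(k')}) = \Cov(f_i,f_j)$ whenever the copies share a source, and the second law forces $\Cov(f_i^{(k)},f_j^{(k')})=0$ when they do not; choosing the inflation and the test vector so that each desired pair $(i,j)$ contributes with the correct sign $\mathrm{sign}(W_{ij})$ and correct multiplicity, and the unwanted cross terms vanish, converts the inequality $v^{\mathsf T}\widetilde\cC v\ge 0$ into the bound $\langle W,\cC\rangle\ge 0$. This is precisely the content of the lemma referenced in the figure caption (Lemma~\ref{lem:CovInflationCondition_BipartiteSources}), and I would prove it by: (a) reducing to extreme rays of $\cK_{\mathrm{dec}}^{*}$, which are the $W$ supported on a single edge with $W_{ij}^2 = W_{ii}W_{jj}$ — equivalently $W = \epsilon\, u u^{\mathsf T}$ restricted to two coordinates for a sign $\epsilon$; then (b) for such a single-edge $W$, exhibiting the order-$2$ inflation in which the two relevant copies of $A_i$ and $A_j$ do or do not share the source according to $\epsilon$, so that positivity of the $2\times2$ principal minor of $\widetilde\cC$ is exactly $W_{ii}W_{jj} \ge W_{ij}^2$ with the right sign.

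I expect the main obstacle to be the combinatorial bookkeeping in step (b) for a \emph{general} bipartite-source network rather than a single edge: one must simultaneously, over all edges, pick the permutations $\pi_i^\alpha$ (equivalently, a global sign function $\epsilon$ on sources and the induced copy-assignment on parties) so that the chosen test vector picks up \emph{every} off-diagonal entry $W_{ij}$ with sign $\mathrm{sign}(W_{ij})$ and yet all spurious correlations between non-cosourced copies drop out — and this has to be done while respecting that a single party $A_i$ is incident to several sources with possibly conflicting sign demands. The resolution is the reduction to extreme rays in step (a): duality (Lemma~\ref{lem:cones}) lets me verify $\langle W,\cC\rangle\ge0$ only on the generators of $\cK_{\mathrm{dec}}^{*}$, and each generator is supported on one edge, so only a two-party sub-experiment and an order-$2$ inflation are ever needed, sidestepping the global conflict entirely. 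A secondary, more bookkeeping-level obstacle is checking that $\cK_{\cN}$ is genuinely a closed convex cone (or at least that its closure is what matters), so that $(\cK_{\cN}^*)^* = \overline{\cK_{\cN}} $ can be compared with $\cK_{\mathrm{dec}}$; here I would invoke Lemma~\ref{lem:cones}(i) to work with closures throughout and note that the Gaussian tightness construction already shows $\cK_{\mathrm{dec}}$ is dense in the relevant part of $\cK_{\cN}$, so the two closed cones coincide.
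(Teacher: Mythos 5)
Your overall architecture matches the paper's: pass to the dual, identify $\cK_{\mathrm{dec}}^{*}=\bigcap_\alpha\cL_\alpha^{*}$ as the matrices all of whose source-edge $2\times2$ principal blocks are PSD, and then use inflation to show $\langle W,\cC\rangle\geq 0$ for every such $W$. (One sign error: to get $\cK_\cN\subseteq\cK_{\mathrm{dec}}$ you need $\cK_{\mathrm{dec}}^{*}\subseteq\cK_\cN^{*}$, not the inclusion you wrote; your verbal gloss is the correct one.) However, there is a genuine gap at the step you lean on hardest: the claim that $\cK_{\mathrm{dec}}^{*}$ is generated by matrices supported on a single edge is false, and it is precisely this reduction that lets you ``sidestep the global conflict entirely.'' For a single-edge $W$ with PSD block, $\langle W,\cC\rangle\geq 0$ already follows from Cauchy--Schwarz for covariances with no inflation at all, so if your reduction were valid the theorem would collapse to the trivial statement that the $2\times2$ blocks of $\cC$ are PSD --- which is strictly weaker than admitting an $\cN$-compatible decomposition. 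Concretely, on the triangle network the matrix $W=2I-J$ ($1$ on the diagonal, $-1$ off the diagonal, $J$ the all-ones matrix) lies in $\cK_{\mathrm{dec}}^{*}$, is not a nonnegative combination of single-edge elements (a short computation with the diagonal budget at each vertex shows this), and gives $\langle 2I-J, J\rangle=-3<0$; so the perfectly correlated covariance matrix $\cC=J$ passes every single-edge test yet is excluded only by this genuinely multi-edge functional. Detecting it is exactly where inflation is indispensable.

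The missing ingredient is the paper's Lemma~\ref{lem:CovInflationCondition_BipartiteSources}: a \emph{single} order-$2$ inflation handles all edges simultaneously, because the ``parallel vs.\ crossed'' choice is made independently per source (each source has its own two copies and its own wiring), so the conflicting-sign problem you anticipate at a party incident to several sources simply does not arise. This yields $\cC\circ\Gamma_\epsilon\succeq 0$ for every global sign function $\epsilon$, equivalently $\widehat\cC\succeq 0$ for the comparison matrix. The dual-cone computation (the paper's Lemma~\ref{lem:AlternativeSBMD_BipartiteSources}) then shows that for a general $X\in\cK_{\mathrm{dec}}^{*}$ one has $\tr(X^\dagger M)\geq\bra{v}\widehat M\ket{v}$ with $v_i=\sqrt{X_{ii}}$, i.e.\ the relevant generators of the dual cone are of the form $D\Gamma_\epsilon D$ for diagonal $D\succeq 0$ and a global sign pattern $\epsilon$ --- genuinely all-edges-at-once objects. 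You would need to replace your step (a) by this argument; steps as you have written them do not suffice.
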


Our proof of this theorem is in two steps.
First, based on non-fanout inflations, we show in Lemma~\ref{lem:CovInflationCondition_BipartiteSources} that the Schur product of the covariance matrix $\cC(f_1, \dots, f_n)$ with any \emph{sign matrix} (to be defined) is positive semidefinite. Then, using the theory of dual cones, in Lemma~\ref{lem:AlternativeSBMD_BipartiteSources} we show that any matrix with the aforementioned property admits an $\cN$-compatible matrix decomposition. 

\begin{definition}[Sign matrix] 
Consider a network $\cN$ all of whose sources are bipartite.
A sign function $\epsilon$ on $\cN$ is a function that assigns $\epsilon(\alpha)\in \{\pm 1\}$ to any source $S_\alpha$ of $\cN$. 
Then a \emph{sign matrix}  $\Gamma_\epsilon = (\gamma_{ij})$ associated to $\epsilon$ is an $n\times n$ symmetric matrix such that 
\begin{align}
\gamma_{ij} = \begin{cases}
1 &\qquad i=j,\\
\epsilon(\alpha)& \qquad i\neq j ~\&~ \alpha\to i, j.
\end{cases}\label{eq:def-sign-matrix}
\end{align}
\label{def:sign matrix}
Note that if parties $i\neq j$ do not have a common source, then the entry $\gamma_{ij}$ is unconstrained and can take any value.

\end{definition}

We can now state our first lemma.

\begin{lemma}\label{lem:CovInflationCondition_BipartiteSources}
Let $\cN$ be a network with parties $A_1, \dots, A_n$ and with bipartite sources. Let $f_i(a_i)$ be an arbitrary real function of the output of $A_i$. Then for any sign matrix $\Gamma_\epsilon$, the Schur product $\cC(f_1,\dots,f_n)\circ \Gamma_\epsilon$ is positive semidefinite.
\end{lemma}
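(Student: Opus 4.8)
The plan is to construct, for a given sign function $\epsilon$, an explicit non-fanout inflation $\widetilde\cN$ of order $d=2$ whose output covariance matrix, restricted to a suitably chosen pair of copies per party, realizes $\cC\circ\Gamma_\epsilon$ as a genuine covariance matrix — and hence is positive semidefinite. Concretely, for each source $S_\alpha$ (which is bipartite, say $\alpha\to i,j$) I will choose the connecting permutations $\pi_i^\alpha,\pi_j^\alpha$ on $\{1,2\}$ so that: if $\epsilon(\alpha)=+1$ the copies $S_\alpha^{(1)},S_\alpha^{(2)}$ connect ``straight'' (i.e. $S_\alpha^{(k)}$ feeds $A_i^{(k)}$ and $A_j^{(k)}$), while if $\epsilon(\alpha)=-1$ they connect ``crossed'' for the two endpoints, so that $A_i^{(1)}$ and $A_j^{(1)}$ no longer share the copy of $S_\alpha$ at all. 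Whether this can be done consistently across all sources simultaneously is exactly the NDCS/bipartite structure at work: since every source is bipartite, the choice of permutation for source $\alpha$ only couples the two copies of its two endpoint parties through that single source, and there is no conflict. I should double-check that the assignment $A_i^{(1)}\mapsto$ some fixed copy and $A_i^{(2)}\mapsto$ the other is globally well-defined; the cleanest way is to fix, for each party $A_i$, a reference copy $A_i^{(1)}$, and then for each incident source choose $\pi_i^\alpha=\mathrm{id}$ on that side and let the $\pm1$ be carried entirely by $\pi_j^\alpha$ on the other side — one must verify this does not create an inconsistency when $\epsilon$ is read off both endpoints, which it won't because each source has exactly two endpoints and we only need the relative permutation $\pi_i^\alpha (\pi_j^\alpha)^{-1}$ to encode $\epsilon(\alpha)$.

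Having built $\widetilde\cN$, I let each $A_i^{(k)}$ perform the same measurement as $A_i$ and compute the same function $f_i$, writing $f_i^{(k)}$ for the resulting random variable. By the first causality law, $\Cov(f_i^{(k)},f_i^{(k)})=\Var(f_i)=\cC_{ii}$, and for $i\neq j$ sharing source $S_\alpha$: if the copies $A_i^{(1)},A_j^{(1)}$ share the (unique) copy $S_\alpha^{(\ell)}$, the first law gives $\Cov(f_i^{(1)},f_j^{(1)})=\Cov(f_i,f_j)=\cC_{ij}$; if instead the crossed wiring has made $A_i^{(1)},A_j^{(1)}$ share no common source, the second causality law gives $\Cov(f_i^{(1)},f_j^{(1)})=0$. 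The point is that with the wiring above, $A_i^{(1)}$ and $A_j^{(1)}$ share their copy of $S_\alpha$ precisely when $\epsilon(\alpha)=+1$. Wait — that only produces the $\{0,1\}$-valued Schur multiplier, not the $\pm1$ one. To get a genuine $-1$ I instead use the order-$2$ inflation together with the sign trick on the functions: define the combined variable $g_i := f_i^{(1)} - \mathbb{E}[f_i]$ on copy $1$ when... more cleanly: apply the inflation so that for $\epsilon(\alpha)=-1$ the copy $S_\alpha$ connects $A_i^{(1)}$ with $A_j^{(2)}$ (not $A_j^{(1)}$), and then consider the vector of variables $h_i := f_i^{(1)}$ for all $i$, but \emph{replace} $f_j^{(1)}$ by $f_j^{(1)}$ on a relabeled copy chosen per source — which cannot be done per source. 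The correct resolution, and the one I expect the paper uses, is: take the random vector $(f_1^{(1)},\dots,f_n^{(1)},f_1^{(2)},\dots,f_n^{(2)})$, whose $2n\times 2n$ covariance matrix $\widetilde\cC$ is PSD, compute its block structure from the two causality laws (diagonal blocks each equal $\cC$ up to which entries survive, off-diagonal block determined by the crossed sources), and then extract $\cC\circ\Gamma_\epsilon$ as $v^\dagger \widetilde\cC v \succeq 0$ for an appropriate $\{0,\pm1\}$ combination $v = (x, \pm x)$ of the two copies componentwise — i.e. $h_i := f_i^{(1)} \pm f_i^{(2)}$ with signs chosen from a $2$-coloring. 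That requires the sign function to be realizable by a partition of parties, which is \emph{not} always the case, so more likely the mechanism is: use the crossed wiring to directly kill the entries where we'd want $+1\to$ something, and use a \emph{second} inflation (or the $d=2$ one read the other way) for the $-1$ entries, then average.

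Let me state the mechanism I'll actually write: fix $\epsilon$. Build the order-$2$ inflation where source $\alpha$ is wired straight if $\epsilon(\alpha)=+1$ and crossed if $\epsilon(\alpha)=-1$. Consider $g_i := f_i^{(1)} - f_i^{(2)}$. Then $\Cov(g_i,g_j) = \Cov(f_i^{(1)},f_j^{(1)}) - \Cov(f_i^{(1)},f_j^{(2)}) - \Cov(f_i^{(2)},f_j^{(1)}) + \Cov(f_i^{(2)},f_j^{(2)})$. For $i=j$ this is $2\Var(f_i) - 2\Cov(f_i^{(1)},f_i^{(2)})$; since $A_i^{(1)},A_i^{(2)}$ share no source (copies of the same party are never joined — each source copy goes to one copy of each endpoint) the second law gives $\Cov(f_i^{(1)},f_i^{(2)})=0$, so $\Cov(g_i,g_i)=2\cC_{ii}$. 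For $i\neq j$ sharing $S_\alpha$ with $\epsilon(\alpha)=+1$ (straight wiring): $A_i^{(1)}\sim A_j^{(1)}$ and $A_i^{(2)}\sim A_j^{(2)}$ through copies of $S_\alpha$, while the cross pairs share nothing, giving $\Cov(g_i,g_j) = \cC_{ij} - 0 - 0 + \cC_{ij} = 2\cC_{ij} = 2\gamma_{ij}\cC_{ij}$. For $\epsilon(\alpha)=-1$ (crossed): $A_i^{(1)}\sim A_j^{(2)}$ and $A_i^{(2)}\sim A_j^{(1)}$, while $A_i^{(1)},A_j^{(1)}$ share nothing, giving $\Cov(g_i,g_j) = 0 - \cC_{ij} - \cC_{ij} + 0 = -2\cC_{ij} = 2\gamma_{ij}\cC_{ij}$. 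For $i\neq j$ with no common source in $\cN$, all four terms vanish or equal the product of marginals in the same way and $\Cov(g_i,g_j)=0$, which is $2\gamma_{ij}\cC_{ij}$ since $\cC_{ij}=0$ there too (by the independence principle applied to $\cN$ itself). Hence the covariance matrix of $(g_1,\dots,g_n)$ equals $2(\cC\circ\Gamma_\epsilon)$, which is PSD, proving the lemma. The main obstacle is purely bookkeeping: verifying that the crossed wiring is a legitimate non-fanout inflation (a valid permutation assignment for every incident source of every party, with no global consistency constraint) and that ``no common source'' cases behave as claimed — both of which rest squarely on every source being bipartite, so no two crossings interact. I'll make sure to note explicitly where bipartiteness is used.
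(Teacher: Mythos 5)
Your proposal is correct and is essentially the paper's own proof: you use the same order-$2$ non-fanout inflation (straight wiring for $\epsilon(\alpha)=+1$, crossed for $\epsilon(\alpha)=-1$), and your variables $g_i=f_i^{(1)}-f_i^{(2)}$ reproduce exactly the principal submatrix that the paper extracts after conjugating the $2n\times 2n$ inflated covariance matrix by $I\otimes H$ (the second row of the Hadamard matrix being $\tfrac{1}{\sqrt 2}(1,-1)$), giving $2\,\cC\circ\Gamma_\epsilon\succeq 0$. The bookkeeping facts you flag — copies of the same party share no source, and cross pairs share no source because a bipartite-source network with no redundant sources is NDCS — are precisely what the paper relies on as well.
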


Note that although the entries $\gamma_{ij}$ of a sign matrix $\Gamma_\epsilon$ for which $i, j$ do not have a common source are not uniquely determined in terms of $\epsilon$, the Schur product $\cC(f_1,\dots,f_n)\circ \Gamma_\epsilon$ depends only on $\epsilon$ (by the independence principle, the corresponding entries in $\cC(f_1,\dots,f_n)$ vanish, forcing those entries in $\cC(f_1,\dots,f_n)\circ \Gamma_\epsilon$ to be zero).

\begin{proof}
To prove this lemma we use the inflation technique discussed in Subsection~\ref{subsec:inflation}. Consider the following non-fanout inflation of $\cN$. For any party $A_i$ in $\cN$ consider two parties  $A_i^{(1)}, A_i^{(2)}$ in the inflated network $\widetilde \cN$. Then for any source $\alpha=(i, j)$ in $\cN$ consider two sources $\alpha^{(1)}, \alpha^{(2)}$ in $\widetilde \cN$. If $\epsilon(\alpha)=+1$, we connect one of them to $A_i^{(1)}, A_j^{(1)}$ and the other one to $A_i^{(2)}, A_j^{(2)}$. If $\epsilon(\alpha)=-1$, then we connect one of them to $A_i^{(1)}, A_j^{(2)}$ and the other one to  $A_i^{(2)}, A_j^{(1)}$. See Figure~\ref{fig:inflation} for an example of a sign function and its associated inflation. 

Now suppose that party $A_i^{(k)}$ computes function $f_i$ of her output, and let $\widetilde \cC$ be the covariance matrix of these $2n$ functions, that is a $(2n)\times (2n)$ positive semidefinite matrix. Suppose that we index rows and columns of $\widetilde\cC$ by $i_k$, $1\leq i\leq n$ and $k=1, 2$, representing party $A_i^{(k)}$. 
Then as mentioned in Subsection~\ref{subsec:inflation} entries of $\widetilde\cC$ can be computed as follows:
\begin{itemize}
    \item $\widetilde \cC_{i_k, i_\ell} = \delta_{k, \ell}\Var(f_i)$,
    \item $\widetilde \cC_{i_k, j_{\ell}} = 0$ if $A_i\neq A_j$ do not have a common source in $\cN$,
    \item $\widetilde \cC_{i_k, j_\ell} =\delta_{k, \ell} \Cov(f_i, f_j)$ if $\alpha=(i, j)$ is a source in $\cN$ and $\epsilon(\alpha)=+1$,
    \item $\widetilde \cC_{i_k, j_\ell} =\delta_{3-k, \ell} \Cov(f_i, f_j)$ if $\alpha=(i, j)$ is a source in $\cN$ and $\epsilon(\alpha)=-1$.
\end{itemize}
Indeed, the covariance matrix $\widetilde{\cC}$ can be written as
\begin{equation}
\widetilde{\cC}=\sum_{\alpha=(i,j)} \Cov(f_i, f_j) \ketbra{i}{j}\otimes T_{\epsilon(\alpha)},
\end{equation}
where 
$$T_{+1}=\begin{pmatrix} 1 & 0 \\ 0 & 1 \end{pmatrix}, \qquad  \qquad
T_{-1}=\begin{pmatrix} 0 & 1 \\ 1 & 0\end{pmatrix}.$$
Next, introducing the Hadamard matrix $$H=\frac{1}{\sqrt{2}}\begin{pmatrix} 1 & 1 \\ 1 & -1 \end{pmatrix},$$
we find that
\begin{equation}\label{eq:cC-hadamard-0}
(I\otimes H) \cdot \widetilde{\cC} \cdot (I\otimes H^\dagger) =\sum_{\alpha=(i,j)} \Cov(f_i, f_j) \ketbra{i}{j}\otimes \begin{pmatrix}
1 & 0 \\
0 & \epsilon(\alpha)
\end{pmatrix}.
\end{equation}
Recall that $\widetilde\cC$ is positive semidefinite. Then the matrix on the right hand side of~\eqref{eq:cC-hadamard-0}, and any of its principal submatrices, is also positive semidefinite. This means that the submatrix consisting of rows and columns indexed by $i_2$, $1\leq i\leq n$, that is equal to 
\begin{equation*}
\sum_{{\alpha}=(i,j)} \Cov(f_i, f_j)\epsilon(\alpha) \ketbra{i}{j} = \cC\circ \Gamma_\epsilon,
\end{equation*}
is positive semidefinite.
\end{proof}

We now show that any matrix satisfying the condition of Lemma~\ref{lem:CovInflationCondition_BipartiteSources} admits an $\cN$-compatible matrix decomposition. 

\begin{lemma}\label{lem:AlternativeSBMD_BipartiteSources}
Let $\cN$ be a network with parties $A_1, \dots, A_n$ and with bipartite sources. Consider an $n\times n$ positive semidefinite matrix $M$ such that $M_{ij}=0$ if $A_i\neq A_j$ do not share a source. 
Then $M$ admits an $\cN$-compatible matrix decomposition if and only if $\widehat M$ is positive semidefinite where
\begin{align*}
\widehat M_{ij} = \begin{cases}
M_{ii} & \quad i=j,\\
-|M_{ij}| & \quad i\neq j,
\end{cases}
\end{align*}
is the \emph{comparison matrix} of $M$.
\end{lemma}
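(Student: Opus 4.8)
The statement is an "if and only if" characterizing when a PSD matrix $M$, whose zero pattern is compatible with a bipartite-source network $\cN$, admits an $\cN$-compatible decomposition. Since sources are bipartite, each source $S_\alpha$ is identified with an edge $\alpha=(i,j)$, and a matrix $M_\alpha\in\cL_\alpha$ is supported only on the $2\times 2$ principal block indexed by $\{i,j\}$; being PSD, such a block has the form $\begin{pmatrix} x & z \\ \bar z & y\end{pmatrix}$ with $x,y\ge 0$ and $|z|^2\le xy$. So an $\cN$-compatible decomposition $M=\sum_\alpha M_\alpha$ is exactly a way to split each off-diagonal entry $M_{ij}$ (when $i\sim j$) onto the unique source edge $(i,j)$, and to split each diagonal entry $M_{ii}=\sum_{\alpha\to i} (M_\alpha)_{ii}$ among the incident edges, with every $2\times 2$ block PSD. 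I want to recast this as a feasibility question and dualize it.

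\textbf{Approach.} I will use the cone-duality machinery of Subsection~\ref{subsecc:cones}, exactly as the paper's "Proof ideas" paragraph suggests. Let $\cC_\cN$ be the closed convex cone of PSD matrices with the $\cN$-zero pattern (off-diagonal entry forced to $0$ when $i\not\sim j$); this is the intersection of the PSD cone with a linear subspace, hence a closed convex cone. Let $\cD_\cN\subseteq\cC_\cN$ be the cone of matrices admitting an $\cN$-compatible decomposition; this is a sum of the cones $\cL_\alpha$, hence a convex cone (closed here because there are finitely many sources and each $\cL_\alpha$ is a closed cone supported on a fixed $2$-dimensional index set, so the sum is closed — one can check this directly or note the blocks don't interact badly). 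And let $\widehat{\cC}_\cN$ be the cone of $M\in\cC_\cN$ whose comparison matrix $\widehat M$ is PSD. The claim is $\widehat{\cC}_\cN=\cD_\cN$. I would prove it by showing the dual cones coincide.

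\textbf{Key steps.} (1) Show $\cD_\cN\subseteq\widehat{\cC}_\cN$ — this is the "easy" direction: if $M=\sum_\alpha M_\alpha$ with each $M_\alpha$ PSD supported on edge $\alpha=(i,j)$, then replacing $(M_\alpha)_{ij}$ by $-|(M_\alpha)_{ij}|$ and conjugating by a diagonal phase matrix on vertex... actually more carefully, build $\widehat{M}=\sum_\alpha \widehat{M_\alpha}$ where $\widehat{M_\alpha}$ is the $2\times2$ PSD block with off-diagonal $-|(M_\alpha)_{ij}|$; each $\widehat{M_\alpha}\succeq0$, and $\widehat{M_\alpha}$ has the same diagonal as $M_\alpha$ and off-diagonal $-|(M_\alpha)_{ij}|$. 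But $|M_{ij}|=|(M_\alpha)_{ij}|$ since only one source contributes to entry $(i,j)$, so $\sum_\alpha\widehat{M_\alpha}$ has diagonal $M_{ii}$ and off-diagonal $-|M_{ij}|$, i.e. it equals $\widehat M$; hence $\widehat M\succeq0$. (2) For the converse $\widehat{\cC}_\cN\subseteq\cD_\cN$, dualize: compute $\cD_\cN^*=\bigcap_\alpha\cL_\alpha^*$ by Lemma~\ref{lem:cones}(vii), where $\cL_\alpha^*$ consists of Hermitian matrices $W$ whose $2\times2$ block on $\{i,j\}$ is PSD (the support constraint on $\cL_\alpha$ dualizes to "no constraint off the block", and PSD is self-dual on the block). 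Then compute $\widehat{\cC}_\cN^*$: describe $\widehat{\cC}_\cN$ as an intersection and dualize via (vi). (3) Show the two dual descriptions match, then apply (iii) to conclude. I expect the cleanest route is actually to bypass (3) by arguing directly that any $W$ in the dual of $\widehat{\cC}_\cN$ lies in every $\cL_\alpha^*$, i.e. has PSD $2\times2$ blocks on each edge: test $W$ against a well-chosen rank-one element of $\widehat{\cC}_\cN$ supported on a single edge (e.g. $\ketbra{v}{v}$ with $v$ supported on $\{i,j\}$ and chosen so its comparison matrix is still PSD — the comparison matrix of a real nonnegative-entry rank-one matrix equals... needs care with signs).

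\textbf{Main obstacle.} The crux — and the step I'd budget the most time for — is the converse inclusion $\widehat{\cC}_\cN\subseteq\cD_\cN$, i.e. actually constructing the decomposition (or its dual certificate). The subtlety is the interaction of \emph{signs/phases} of the off-diagonal entries across a cycle in the network graph: the comparison matrix replaces every $M_{ij}$ by $-|M_{ij}|$, effectively "gauging away" phases, and one must show that $\widehat M\succeq 0$ is enough to reconstruct a valid PSD-block decomposition of the original $M$ with its true phases. In the dual picture this is the question of whether $\bigcap_\alpha\cL_\alpha^* \subseteq \widehat{\cC}_\cN^*$; unwinding it, one needs: if $W$ has PSD $2\times2$ blocks on every edge, then $\langle W,M\rangle\ge0$ for every $M\in\cC_\cN$ with $\widehat M\succeq0$. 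I would attack this by a Schur-product / gauge argument: given such $M$, conjugate by a diagonal phase matrix $D=\mathrm{diag}(e^{i\phi_1},\dots,e^{i\phi_n})$ to try to make all off-diagonal entries real and nonpositive — this is possible edge-by-edge but globally only if phase constraints around cycles are consistent, which is \emph{not} guaranteed; the resolution is that $\langle W,M\rangle = \langle W, DMD^\dagger\rangle$ with $W$ replaced by a diagonally-conjugated $W$ which still has PSD blocks, so it suffices to handle the case where $M$ has been made "as real as possible," and then compare entrywise with $\widehat M$ using that $W$'s blocks are PSD (hence $|W_{ij}|\le\sqrt{W_{ii}W_{jj}}$) together with $\widehat M\succeq 0$. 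Making this comparison rigorous — essentially an entrywise domination $\langle W, M\rangle\ge\langle W',\widehat M\rangle\ge 0$ for a suitable $W'$ with PSD blocks — is the technical heart, and I expect the authors handle it with a clean application of the fact that the Schur product of two PSD matrices is PSD (taking $W'$ from $W$ and a rank-one sign-pattern matrix, then pairing with $\widehat M$).
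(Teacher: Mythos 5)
Your plan follows the paper's proof essentially step for step: the same two cones (your $\mathcal{D}_{\cN}$ and $\widehat{\mathcal{C}}_{\cN}$ are the paper's $\cL$ and $\cK$), the same easy inclusion $\cL\subseteq\cK$ via $\widehat M=\sum_\alpha\widehat{M_\alpha}$ with each $2\times 2$ comparison block PSD, the same dual computation $\cL^*=\bigcap_\alpha\cL_\alpha^*$, and the same reduction of the hard direction to showing $\langle W,M\rangle\geq 0$ whenever $W$ has PSD edge-blocks and $\widehat M\succeq 0$. Two remarks. First, your proposed shortcut of ``arguing directly that any $W$ in the dual of $\widehat{\mathcal{C}}_{\cN}$ lies in every $\cL_\alpha^*$'' would establish $\widehat{\mathcal{C}}_{\cN}^*\subseteq\mathcal{D}_{\cN}^*$, which dualizes back to the inclusion $\mathcal{D}_{\cN}\subseteq\widehat{\mathcal{C}}_{\cN}$ you already have from step (1); the inclusion you actually need is the one you correctly state later, namely $\bigcap_\alpha\cL_\alpha^*\subseteq\widehat{\mathcal{C}}_{\cN}^*$. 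Second, the phase-consistency-around-cycles obstacle you budget the most time for does not arise in the paper's execution: one simply bounds term by term, $M_{ij}\bar W_{ij}+M_{ji}\bar W_{ji}\geq -2|M_{ij}|\,|W_{ij}|\geq -2|M_{ij}|\sqrt{W_{ii}W_{jj}}$, so that $\tr(W^{\dagger}M)\geq\bra{v}\widehat M\ket{v}\geq 0$ where $\ket{v}$ has coordinates $\sqrt{W_{ii}}$; no gauge fixing or Schur-product lemma is needed. With that substitution for your ``technical heart,'' your outline becomes the paper's proof.
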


We note that this lemma is first proven in~\cite{ringbauer2018certification} in the context of coherence theory in the special case where the underlying graph is complete (i.e., every two parties share a source). Moreover, the importance of this lemma in the causal inference problem has been notified in~\cite{kraft2020characterizing}. 
This lemma also has applications in entanglement theory~\cite{singh2020ppt}. We will generalise it in Lemma~\ref{lem:dual-cone-description}. Here we present a proof different from those of~\cite{ringbauer2018certification, singh2020ppt}.  

\begin{proof}
We prove this lemma using the theory of dual cones. We first introduce two sets of positive semidefinite matrices as follows:
\begin{itemize}
\item We let $\cK$ be the set of $n\times n$ positive semidefinite matrices $M$ such that $M_{ij}=0$ if $A_i\neq A_j$ do not have a common source in $\cN$, and $\widehat M \succeq 0$.

\item We let $\cL$ be the set of positive semidefinite $n\times n$ matrices that admit an $\cN$-compatible matrix decomposition. Note that using the notation we introduced in Definition~\ref{def:matrix-decomp} we have $\cL=\sum_\alpha \cL_\alpha$.
\end{itemize}
The statement of the lemma says that $\cK=\cL$. As shown in Appendix~\ref{app:K-L-ccc} it is not hard to verify that both $\cK, \cL$ are closed convex cones. Then using Lemma~\ref{lem:cones}, to establish $\cK=\cL$ it suffices to prove that $\cL\subseteq \cK$ and  $\cL^*\subseteq \cK^*$.

\paragraph{$\cL\subseteq \cK$:}
For any $M\in\cL$ there is a decomposition $M=\sum_\alpha M_\alpha$ with $M_\alpha\in \cL_\alpha$. We note that for any $2\times 2$ positive semidefinite matrix $X$ we have $\widehat X\succeq 0$. This means that $\widehat M_\alpha\succeq 0$ for any $\alpha$. Therefore, $\widehat M=\sum_\alpha \widehat M_\alpha$ is positive semidefinite and $ M\in \cK$.

\paragraph{$\cL^*\subseteq \cK^*$:} We need to show that for any $X\in\cL^*$ and $M\in\cK$ we have $\langle X, M\rangle=\tr(X^\dagger M)\geq 0$. To this end, first remark that by Lemma~\ref{lem:cones}, we have $\cL^*=\bigcap_\alpha \cL_\alpha^*$. On the other hand, for $\alpha=(i, j)$, the cone $\cL_\alpha^*$ is the set of matrices whose $\alpha$-block, i.e., the principal submatrix consisting of rows and columns indexed by $i, j$, is positive semidefinite. Therefore, $X\in \cL^*$ means that for any $\alpha=(i,j)$ 
we have $|X_{ij}|\leq\sqrt{X_{ii}X_{jj}}$. Then we have
\begin{align*}
\ttr{X^\dagger M} &= \sum_i M_{ii} X_{ii} + \sum_{\alpha=(i,j)}\big( M_{ij} \bar X_{ij} +M_{ji} \bar X_{ji}\big)\\
 &\geq \sum_i M_{ii} X_{ii} - 2 \sum_{\alpha=(i, j)} |M_{ij}X_{ij}|\\
 &\geq \sum_i M_{ii} X_{ii} - 2 \sum_{\alpha=(i,j)} |M_{ij}| \sqrt{X_{ii}X_{jj}}\\
 &\geq \bra{v} \,\widehat M \,\ket{v}\\
 &\geq 0,
\end{align*}
where $\ket{v}$ is the vector with coordinates $\sqrt{X_{ii}}$, $1\leq i\leq n$. Here the last inequality holds since by assumption $M\in \cK$ and $\widehat M\succeq 0$.
\end{proof}

We can now give the proof of Theorem~\ref{thm:main-result_bipartite}.

\begin{proof}[Proof of Theorem~\ref{thm:main-result_bipartite}]
By Lemma~\ref{lem:AlternativeSBMD_BipartiteSources} we need to show that $\widehat \cC$ is positive semidefinite where $\cC=\cC(f_1, \dots, f_n)$. On the other hand, since by assumption entries of $\cC$ are real, there is a sign matrix $\Gamma_\epsilon$ such that $\widehat \cC = \cC\circ \Gamma_\epsilon$. Next, by Lemma~\ref{lem:CovInflationCondition_BipartiteSources} we know that $\cC\circ \Gamma_\epsilon$ is positive semidefinite. We are done. 

\end{proof}

We remark that  in Theorem~\ref{thm:main-result_bipartite} for simplicity of presentation and conveying some of our ideas we restrict to real functions $f_i$. In the next section where we prove Theorem~\ref{thm:main-result} in its most general form, we cover complex functions $f_i$ as well. Nevertheless, here we briefly explain the changes we should make in the above proofs in order to include complex functions.

Lemma~\ref{lem:AlternativeSBMD_BipartiteSources} already works for complex matrices, so we only need to generalize Lemma~\ref{lem:CovInflationCondition_BipartiteSources}.
To this end, let $\epsilon$ be a function that associates a norm-1 complex number $\epsilon(\alpha)$ to any source $\alpha$.  Then we define the \emph{generalised sign matrix} $\Gamma_\epsilon = (\gamma_{ij})$ to be an $n\times n$ matrix with entries 
\begin{align}
\gamma_{ij} = \begin{cases}
1 &\qquad i=j,\\
\epsilon(\alpha)& \qquad i< j ~\&~ \alpha\to i, j,\\
\bar{\epsilon}(\alpha)& \qquad i> j ~\&~ \alpha\to i, j.
\end{cases}\label{eq:def-gamma-general}
\end{align}
We note that for any positive semidefinite matrix $M$, there is an appropriate $\epsilon$ such that $\widehat M=M\circ \Gamma_\epsilon$.
Thus for complex functions $f_i$, we need to generalize Lemma~\ref{lem:CovInflationCondition_BipartiteSources} and show that $\cC(f_1, \dots, f_n)\circ \Gamma_\epsilon\succeq 0$ for any function $\epsilon(\alpha)$ with $|\epsilon(\alpha)|=1$.
To prove such a lemma, observe that for sufficiently large $d$, there are $0\leq t_\alpha<d$ such that $|\epsilon(\alpha) - \omega^{t_\alpha}|$ is arbitrarily small, where $\omega=e^{2\pi i/d}$ is the $d$-th root of unity.
Then by a continuity argument it suffices  to show that $\cC(f_1, \dots, f_n)\circ \Gamma_\epsilon\succeq 0$ in the special case where $\epsilon(\alpha) = \omega^{t_\alpha}$.
To prove this, we consider an inflation of $\cN$ with $d$ copies per source and party. Letting $\alpha_1, \dots, \alpha_d$ be copies of $\alpha=(i, j)$, we connect $\alpha_k$ to parties $A_i^{(k)}$ and  $A_j^{(k+t_\alpha)}$, the $k$-th and $(k+t_\alpha)$-th copies of $A_i$ and $A_j$ respectively.
Next, we consider the covariance matrix associated to this inflated network that is an $(nd)\times (nd)$ positive semidefinite matrix.
The $d\times d$ block of this matrix associated to the source $\alpha=(i,j)$ equals $\Cov(f_i, f_j)$ times a permutation matrix that is a shift by $t_\alpha$.
Then, conjugating this $(nd)\times (nd)$ matrix with the block-diagonal matrix all of whose diagonal blocks are Fourier transform, we can simultaneously diagonalize all blocks of the $(nd)\times (nd)$ covariance matrix. Next, putting the second diagonal element of these blocks together we obtain an $n\times n$ matrix that is equal to $\cC(f_1, \dots, f_n)\circ \Gamma_\epsilon$.
Thus, this matrix is positive semidefinite since it is a principal submatrix of a positive semidefinite matrix.    This gives the proof of Lemma~\ref{lem:CovInflationCondition_BipartiteSources} for complex functions.

In the next section, building on the above ideas, we generalize Theorem~\ref{thm:main-result_bipartite} for arbitrary NDCS networks and for arbitrary complex functions.

\section{NDCS networks}\label{sec:NDCSNetworks}

In this section we prove Theorem~\ref{thm:main-result} in its most general form. 
As in the previous section, the proof is in two steps.
We first show in Lemma~\ref{lem:gen-inflation} (which generalises Lemma~\ref{lem:CovInflationCondition_BipartiteSources}) that the Schur product of the covariance matrix $\cC(f_1, \dots, f_n)$ with any \emph{twisted Gram matrix} (to be defined) is positive semidefinite. We prove this using non-fanout inflations. Then in Lemma~\ref{lem:dual-cone-description} (which generalises Lemma~\ref{lem:AlternativeSBMD_BipartiteSources}) we show that any matrix with the aforementioned property admits an $\cN$-compatible matrix decomposition. To prove this result we use the theory of dual cones and the idea of embezzlement discussed in Subsection~\ref{subsec:embezzlement}.

Let us first generalise the notion of sign matrices. Here, the sign function in Definition~\ref{def:sign matrix} is replaced by a collection of permutations and a set of vectors.

\begin{definition}[Twisted Gram matrix]
Let $\cN$ be an arbitrary NDCS network with $n$ parties $A_1, \dots, A_n$. Let $\ket{\psi_1}, \dots, \ket{\psi_n}\in \mathbb C^d$ be arbitrary $d$-dimensional vectors. Also for any source $\alpha$ and parties $A_i$ with $\alpha\to i$, let $\pi_i^\alpha$ be an arbitrary permutation on $\{1, \dots, d\}$. Let $P_{\pi_i^\alpha}$ be the associated permutation matrix acting on $\mathbb C^d$ (i.e., $P_{\pi_i^\alpha}\ket x= \ket{\pi_i^\alpha(x)}$ for any $1\leq x\leq d$). Then a twisted Gram matrix associated to these vectors and permutations is an $n\times n$ matrix $W$ such that
\begin{align}
W_{ij} = \begin{cases}
\bra{\psi_i}\psi_i\rangle &\qquad i=j,\\
\bra{\psi_i} P^{\dagger}_{\pi_i^\alpha} P_{\pi_j^\alpha}\ket{\psi_j}& \qquad i\neq j ~\&~ \alpha\to i, j.
\end{cases}\label{eq:def-W}
\end{align}
Note that, when parties $i\neq j$ do not share any common source, $W_{ij}$ is unconstrained and can take any value. Also, note that $W$ is well-defined since by the NDCS assumption if $A_i, A_j$ have a common source, then it is unique.
\label{def:twisted-Gram-matrix}
\end{definition}

We can now generalise Lemma~\ref{lem:CovInflationCondition_BipartiteSources}.

\begin{lemma}\label{lem:gen-inflation}
Let $\cN$ be an arbitrary NDCS network with $n$ parties $A_1, \dots, A_n$. Let $f_i(a_i)$, $i=1, \dots, n$, be an arbitrary scalar function of the output of $A_i$. Then for any twisted Gram matrix $W$, the Schur product $\cC(f_1, \dots, f_n)\circ W$ is positive semidefinite.
\end{lemma}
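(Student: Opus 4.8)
The plan is to mimic the proof of Lemma~\ref{lem:CovInflationCondition_BipartiteSources} but to work over a $d$-fold inflation instead of a $2$-fold one, using the permutations $\pi_i^\alpha$ from the definition of the twisted Gram matrix as the connecting permutations of the non-fanout inflation, and then replacing the Hadamard matrix by the discrete Fourier transform to simultaneously block-diagonalize. Concretely, fix vectors $\ket{\psi_1}, \dots, \ket{\psi_n}\in\mathbb C^d$ and permutations $\pi_i^\alpha$ realizing a given twisted Gram matrix $W$. First I would build the inflated network $\widetilde\cN$ of order $d$ by taking copies $A_i^{(1)}, \dots, A_i^{(d)}$ of each party and $S_\alpha^{(1)}, \dots, S_\alpha^{(d)}$ of each source, connecting $S_\alpha^{(k)}$ to $A_i^{((\pi_i^\alpha)^{-1}(k))}$ whenever $\alpha\to i$ in $\cN$, exactly as in Subsection~\ref{subsec:inflation}. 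Each party $A_i^{(k)}$ computes the same function $f_i$ on its output, and I let $\widetilde\cC$ be the resulting $(nd)\times(nd)$ covariance matrix, which is positive semidefinite.

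The next step is to read off the entries of $\widetilde\cC$ using the two causality laws. Indexing rows and columns by pairs $(i,k)$: the diagonal blocks give $\widetilde\cC_{(i,k),(i,\ell)} = \delta_{k,\ell}\Var(f_i)$; if $A_i\neq A_j$ share no source in $\cN$ then the whole $(i,j)$ block vanishes by the second law; and if $\alpha$ is the unique source with $\alpha\to i,j$, then $A_i^{(k)}$ and $A_j^{(\ell)}$ share a source in $\widetilde\cN$ precisely when $(\pi_i^\alpha)^{-1}(m) = k$ and $(\pi_j^\alpha)^{-1}(m) = \ell$ for some $m$, i.e.\ when $\ell = (\pi_j^\alpha)^{-1}\pi_i^\alpha(k)$, in which case the first law gives $\widetilde\cC_{(i,k),(j,\ell)} = \Cov(f_i,f_j)$ and otherwise the entry is $0$. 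Thus the $(i,j)$ block equals $\Cov(f_i,f_j)$ times the permutation matrix $P_{\pi_j^\alpha}^\dagger P_{\pi_i^\alpha}$ (acting appropriately), so in Dirac notation $\widetilde\cC = \sum_{\alpha}\sum_{i,j:\,\alpha\to i,j}\Cov(f_i,f_j)\,\ketbra{i}{j}\otimes P_{\pi_i^\alpha}^\dagger P_{\pi_j^\alpha}$ plus the diagonal terms (with the convention $\pi_i^\alpha = \mathrm{id}$ on the diagonal). The slightly fiddly point here is bookkeeping the direction of the permutation and checking that the block in position $(j,i)$ is the conjugate transpose of the one in position $(i,j)$, which holds because $(P_{\pi_i^\alpha}^\dagger P_{\pi_j^\alpha})^\dagger = P_{\pi_j^\alpha}^\dagger P_{\pi_i^\alpha}$.

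Finally I would conjugate $\widetilde\cC$ by a suitable unitary to extract $\cC\circ W$ as a principal submatrix. The natural choice is $U = \bigoplus_i V_i$ where $V_i$ is a unitary on $\mathbb C^d$ sending $\ket{\psi_i}$ (suitably normalized, or rather chosen so that $V_i^\dagger\ket{e}$ is proportional to $\ket{\psi_i}$ for a fixed basis vector $\ket{e}$) — more precisely, I want a fixed vector $\ket{e}\in\mathbb C^d$ and unitaries $V_i$ with $V_i\ket{\psi_i} = $ (a scalar times) $\ket{e}$; but since the $\ket{\psi_i}$ need not be unit vectors one keeps the scalar, and $\bra{e}V_i P_{\pi_i^\alpha}^\dagger P_{\pi_j^\alpha} V_j^\dagger\ket{e}$ then reproduces $\bra{\psi_i}P_{\pi_i^\alpha}^\dagger P_{\pi_j^\alpha}\ket{\psi_j} = W_{ij}$ up to the same scalars, which can be absorbed by a diagonal congruence (diagonal congruence preserves positive semidefiniteness). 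Then the principal submatrix of $U\widetilde\cC U^\dagger$ obtained by keeping, for each $i$, only the row/column indexed by the $\ket{e}$-direction of the $i$-th block is exactly $\cC\circ W$ (up to the harmless diagonal congruence), hence positive semidefinite. I expect the main obstacle to be purely notational: carefully aligning the permutation-matrix conventions from Definition~\ref{def:twisted-Gram-matrix} with the inflation connection rule of Subsection~\ref{subsec:inflation}, and handling the fact that $\ket{\psi_i}$ is an arbitrary (not unit) vector so that a diagonal congruence step is needed — whereas in the bipartite case the Hadamard conjugation did this cleanly because all the relevant vectors were standard basis vectors. No deep idea beyond Lemma~\ref{lem:CovInflationCondition_BipartiteSources} and positive semidefiniteness of principal submatrices should be required; embezzlement enters only in the companion Lemma~\ref{lem:dual-cone-description}, not here.
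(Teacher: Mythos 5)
Your proposal is correct and follows essentially the same route as the paper's proof: the same order-$d$ non-fanout inflation with connections determined by the $\pi_i^\alpha$, the same block structure $\widetilde\cC_{ij}=\Cov(f_i,f_j)P_{\pi_i^\alpha}^\dagger P_{\pi_j^\alpha}$, and the same extraction of $\cC\circ W$ as a principal submatrix after block-diagonal conjugation (the paper merely packages your unitary $V_i$ and the diagonal congruence by $\|\psi_i\|$ into a single matrix $R_i$ with $R_i^\dagger R_i=\|\psi_i\|^2 I$ and $R_i\ket{1}=\ket{\psi_i}$). Your closing remarks --- that the only delicate points are the permutation conventions and the non-unit normalization of the $\ket{\psi_i}$, and that embezzlement plays no role here --- are exactly right.
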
   

\begin{proof} Let $W$ be a twisted Gram matrix given by~\eqref{eq:def-W}. Note that the entries of $W$ that are not specified by~\eqref{eq:def-W} are not important since by the independence principle the corresponding entries in $\cC(f_1, \dots, f_n)$ vanish, so $\cC(f_1, \dots, f_n)\circ W$ is independent of those entries of $W$. 
We consider a non-fanout inflation of the network as follows. For any party $A_i$ we consider $d$ copies $A_i^{(1)}, \dots, A_i^{(d)}$, and for any source $S_\alpha$ we consider $d$ copies $S_{\alpha}^{(1)}, \dots, S_{\alpha}^{(d)}$. If source $S_\alpha$ is adjacent to party $A_i$ in $\cN$, in the inflated network we connect source $S_{\alpha}^{(k)}$ to the party $A_i^{((\pi_i^\alpha)^{-1}(k))}$. This fully describes the inflated network. Next, we assume that party $A_i^{(k)}$ computes $f_i^{(k)}$ that is identical to $f_i$. Then the covariance matrix 
$$\widetilde\cC = \cC\big(f_1^{(1)}, \dots, f_1^{(d)}, f_2^{(1)}, \dots, f_2^{(d)}, \dots\dots, f_n^{(1)}, \dots, f_n^{(d)}\big),$$
is positive semidefinite. Observe that $\widetilde\cC$ is an $(nd)\times (nd)$ matrix consisting of $d^2$ blocks of size $n\times n$. These blocks labeled by pairs $(i, j)$ with $1\leq i,j\leq n$ and denoted by $\widetilde\cC_{ij}$ are described as follows:
\begin{itemize}
\item[--] The $(i, i)$-block is diagonal with $\Var(f_i)=\Cov(f_i, f_i)$ on the diagonal, i.e., $\widetilde\cC_{ii}= \Var(f_i) I$. 

\item[--] If $i\neq j$ and parties $A_i, A_j$ do not share any source, then the $(i, j)$-block is $\widetilde\cC_{ij}=0$.

\item[--] If $i\neq j$ and parties $A_i, A_j$ share source $S_\alpha$, then the $(i, j)$-block equals 
$\widetilde\cC_{ij}=\Cov(f_i, f_j) P_{\pi_i^{\alpha}}^\dagger P_{\pi_j^\alpha}$. See Figure~\ref{fig:permutation} to verify this. 

\end{itemize}

\begin{figure}
\begin{center}
\includegraphics[scale=0.3]{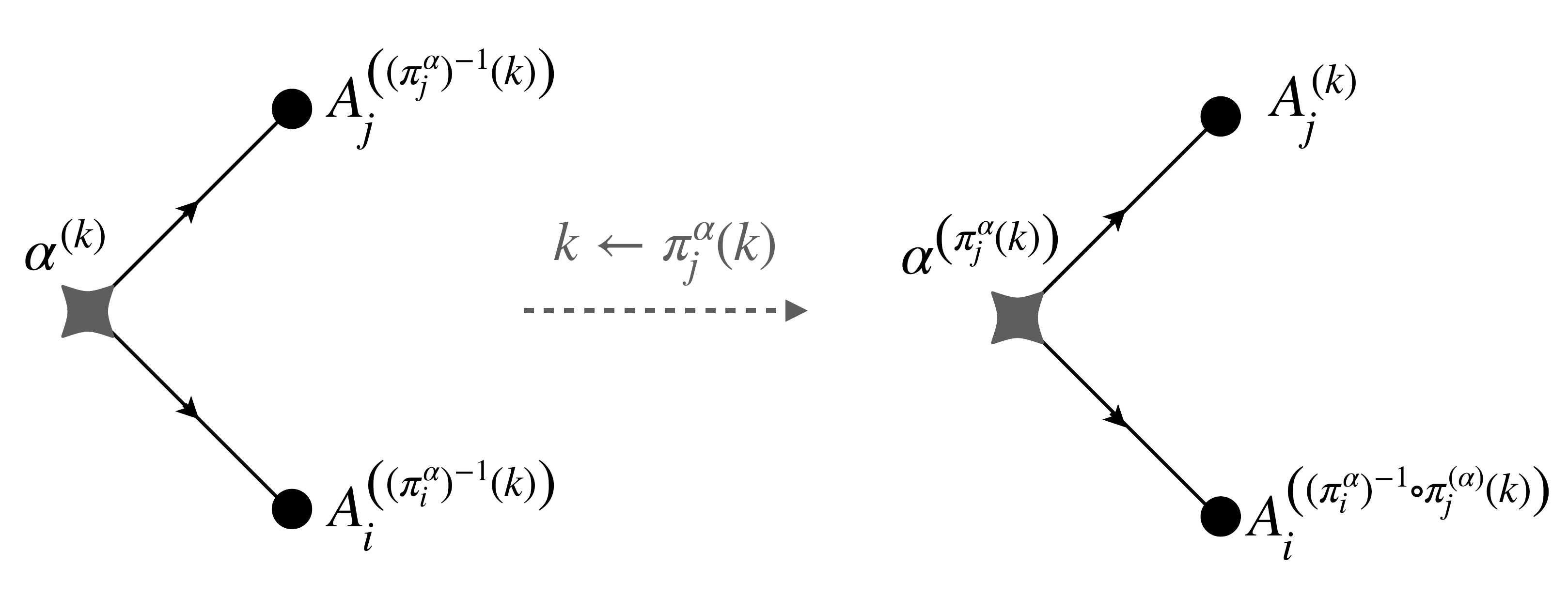}
\caption{\footnotesize If $\alpha\to A_i, A_j$ in $\cN$, then in the inflated network $S_{\alpha}^{(k)}$ is adjacent to $A_i^{((\pi_i^\alpha)^{-1}(k))}$ and $A_j^{((\pi_j^\alpha)^{-1}(k))}$ for any $k$. This means that $S_{\alpha}^{(\pi_j^\alpha(k))}$ is adjacent to $A_j^{(k)}$ and $A_i^{((\pi_i^\alpha)^{-1}\circ \pi_j^\alpha(k))}$.
}
\label{fig:permutation}
\end{center}
\end{figure}  

Let $R_i$ be a $d\times d$ matrix such that 
$$R_iR_i^\dagger = R_i^\dagger R_i=\|\psi_i\|^2I \qquad \quad R_i\ket i=\ket{\psi_i}.$$
Such a matrix can be constructed as follows. If $\ket{\psi}=0$, then let $R_i=0$. Otherwise, start with $\frac{1}{\|\psi_i\|} \ket{\psi_i}$ and extend it to an orthonormal basis; put those basis vectors in the columns of a matrix to get a unitary matrix; finally multiply this unitary by $\|\psi_i\|$ to obtain $R_i$.
Next, let $R$ be the $(nd)\times (nd)$ block diagonal matrix whose $i$-th block on the diagonal is $R_i$.
Then $R^\dagger \widetilde\cC R$ is a positive semidefinite matrix consisting of blocks $R_i^\dagger \widetilde\cC_{ij}R_j$.
Therefore, the principal submatrix of $R^\dagger \widetilde\cC R$ consisting of the $(1, 1)$-entry of all these blocks is also a positive semidefinite matrix. Denoting this $n\times n$ matrix by $M$, the entries of $M$ are computed as follows:
\begin{itemize}
\item[--] $M_{ii}= (R_i^\dagger \,\widetilde\cC_{ii}\,R_i)_{11} = \Var(f_i) \|\psi_i\|^2$. 

\item[--] If $i\neq j$ and parties $A_i, A_j$ do not share any source, then the $(i, j)$-block is $M_{ij}=0$.

\item[--] If $i\neq j$ and parties $A_i, A_j$ share source $S_\alpha$, then 
$M_{ij}=\Cov(f_i, f_j) \bra{\psi_i}P_{\pi_i^{\alpha}}^\dagger P_{\pi_j^\alpha}\ket{\psi_j}$.

\end{itemize}
Therefore, 
$$M= \cC(f_1, \dots, f_n) \circ W,$$
and it is a positive semidefinite matrix. We are done.

\end{proof}

In the following lemma we present a description of the dual cone of positive semidefinite matrices that admit a network-compatible matrix decomposition. This lemma may be of independent interest, notably in coherence and entanglement theory.

\begin{lemma}\label{lem:dual-cone-description}
Let $\cN$ be an arbitrary NDCS network with $n$ parties $A_i$, $i=1, \dots, n$ and sources $S_\alpha$, $\alpha=1, \dots, m$. Let $\cL$ be the cone of positive semidefinite matrices that admit an $\cN$-compatible matrix decomposition. Then 
$$\cL^{*} = \overline{\cW},$$ 
where $\cW$ is the set of twisted Gram matrices of Definition~\ref{def:twisted-Gram-matrix}, and $\overline{\cW}$ is its closure.
\end{lemma}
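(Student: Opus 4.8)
The plan is to establish the two inclusions $\cW\subseteq\cL^*$ and $\cL^*\subseteq\overline{\cW}$ separately, after first checking that $\cW$ is (at least) a cone closed under the relevant operations; then use that dual cones are automatically closed (Lemma~\ref{lem:cones}(i)) to conclude $\cL^* = \overline{\cW}$. The inclusion $\cW\subseteq\cL^*$ should be the routine direction. By Lemma~\ref{lem:cones}, $\cL^* = \bigl(\sum_\alpha\cL_\alpha\bigr)^* = \bigcap_\alpha\cL_\alpha^*$, and a matrix $X$ lies in $\cL_\alpha^*$ iff its $\alpha$-principal submatrix (rows/columns indexed by the parties adjacent to $S_\alpha$) is positive semidefinite. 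So I must show: for any twisted Gram matrix $W$ built from vectors $\ket{\psi_i}$ and permutations $\pi_i^\alpha$, and for each source $\alpha$, the principal submatrix of $W$ on the parties $i$ with $\alpha\to i$ is PSD. But that submatrix has entries $\bra{\psi_i}P^\dagger_{\pi_i^\alpha}P_{\pi_j^\alpha}\ket{\psi_j}$ for $i\ne j$ and $\|\psi_i\|^2$ on the diagonal, which is exactly the Gram matrix of the vectors $P_{\pi_i^\alpha}\ket{\psi_i}$; hence PSD. This uses the NDCS hypothesis crucially, since it guarantees that for a fixed $\alpha$ each relevant party $i$ contributes a single well-defined permutation $\pi_i^\alpha$, so the "$\alpha$-block" of $W$ really is a Gram matrix rather than something with conflicting definitions.

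For the harder inclusion $\cL^*\subseteq\overline{\cW}$, I would take $X\in\cL^* = \bigcap_\alpha\cL_\alpha^*$, so every $\alpha$-principal submatrix of $X$ is PSD, and show $X$ can be approximated arbitrarily well by twisted Gram matrices. The natural attempt is to write $X_{ii} = \bra{\psi_i}\psi_i\rangle$ and, for each source $\alpha$ with $\alpha\to i,j$, realize $X_{ij} = \bra{\psi_i}P^\dagger_{\pi_i^\alpha}P_{\pi_j^\alpha}\ket{\psi_j}$ by choosing the $\ket{\psi_i}$ and the permutations appropriately. If there were a single source connecting all parties, PSD-ness of the full $X$ would let us take $\ket{\psi_i}$ to be columns of a square root of $X$ and all permutations trivial. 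The difficulty is that a party $A_i$ is adjacent to several sources, and the vector $\ket{\psi_i}$ is shared across all of them, while the off-diagonal data relative to different sources is governed by the (generally non-PSD) distinct blocks of $X$ — we cannot simultaneously make $\ket{\psi_i}$ the "right" vector for every incident source. This is exactly where embezzlement enters: the idea is to tensor $\ket{\psi_i}$ with embezzling states $\ket{\vartheta_T}\otimes\ket{\mu_R}$, one independent such factor per source, so that the permutations $\pi_i^\alpha$ act only on the factor associated to $\alpha$ together with a small "phase/magnitude correction" register, and by Lemma~\ref{lem:embezzlement-gen} a suitable permutation converts the state into (something close to) a target product state whose overlaps reproduce the prescribed phases and magnitudes $X_{ij}$. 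Concretely, I expect to decompose $X_{ij}$ (for $\alpha\to i,j$) in polar form and, on the embezzling register for $\alpha$, use the permutation freedom to tune the overlap to the correct magnitude $|X_{ij}|/\sqrt{X_{ii}X_{jj}}\le 1$ (allowed because the $\alpha$-block of $X$ is PSD, so these ratios are the entries of a genuine correlation/Gram matrix) and the correct phase; the remaining registers for the other sources incident to $i$ are left untouched by $\pi_i^\alpha$ so they factor out of the overlap $\bra{\psi_i}P^\dagger_{\pi_i^\alpha}P_{\pi_j^\alpha}\ket{\psi_j}$ and contribute $1$.

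The main obstacle, and the step I would spend the most care on, is setting up this "one embezzling gadget per source, glued along shared parties" construction so that (a) the permutations $\pi_i^\alpha$ really are permutations of a single index set $\{1,\dots,d\}$ (with $d$ depending on the approximation parameter $\varepsilon$, all the $T$'s and $R$'s, and the degrees in the network), (b) the approximate equalities from Lemma~\ref{lem:embezzlement-gen} propagate through the finitely many entries of $W$ without the errors compounding badly — a straightforward union bound / continuity argument over the $O(n^2)$ entries, which is why we only get $\overline{\cW}$ rather than $\cW$ — and (c) the diagonal entries come out exactly $\|\psi_i\|^2 = X_{ii}$, which just requires normalizing the $\ket{\psi_i}$ correctly. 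Once the gadget is in place, the verification that the resulting $W$ approximates $X$ is a bounded computation: diagonal entries are exact, the entries for non-co-sourced pairs are unconstrained and can be matched exactly, and each co-sourced entry is within $O(\varepsilon)$ of $X_{ij}$ by the embezzlement estimate, giving $X\in\overline{\cW}$. Finally, combining with $\cW\subseteq\cL^*$ and the fact that $\cL^*$ is closed gives $\overline{\cW}\subseteq\cL^*$, hence equality.
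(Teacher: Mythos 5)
Your proposal is correct and follows essentially the same route as the paper: reduce to $\cL^*=\bigcap_\alpha\cL_\alpha^*$ so that $\cL^*$ is exactly the set of matrices all of whose $\alpha$-blocks are positive semidefinite, note that each $\alpha$-block of a twisted Gram matrix is the Gram matrix of the vectors $P_{\pi_i^\alpha}\ket{\psi_i}$ (hence $\cW\subseteq\cL^*$ and, $\cL^*$ being closed, $\overline{\cW}\subseteq\cL^*$), and for the converse use embezzlement to extract, for each source $\alpha$, Gram vectors $\ket{\phi_i^\alpha}$ of the $\alpha$-block of a given $M\in\cL^*$. The only refinement you are missing --- and it dissolves your stated main obstacle (a) --- is that the embezzling family is \emph{universal}, so a single register per party carrying the \emph{same} state $\ket{\psi_i}=\sqrt{M_{ii}}\,\ket{\vartheta_T}\otimes\ket{\mu_R}$ serves all incident sources at once: the $\alpha$-dependence sits entirely in the permutation $\pi_i^\alpha$, which by Lemma~\ref{lem:embezzlement-gen} carries this one state close to $\sqrt{M_{ii}}\,\ket{\vartheta_T}\otimes\big(\ket{\phi_i^\alpha}/\|\phi_i^\alpha\|\big)\otimes\ket{\mu_R}$, and since by the NDCS assumption each entry $W_{ij}$ involves only the unique common source of $i$ and $j$, no per-source registers, padding, or gluing of gadgets are needed.
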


Note that this lemma is a generalization of Lemma~\ref{lem:AlternativeSBMD_BipartiteSources}, expressed in the dual picture. In the bipartite case, twisted Gram matrices turn to (generalized) sign matrices which have a much simpler description. Then the dual cone $\cW^*$, that is the cone of interest, is easily characterized; it contains all matrices whose comparison matrix is positive semidefinite. This provides the explicit characterization of $\cL$ in Lemma~\ref{lem:AlternativeSBMD_BipartiteSources}.
In the multipartite case, the set $\cW$ takes a much more complicated structure, so characterizing the dual cone $\cW^*$ is not easy. To clarify this issue, note that assuming that all sources are bipartite, the permutations that we need to use are either transpositions (in the real case) or shifts (in the complex case). Such permutations can be simultaneously diagonalized using Fourier transform as we did in the proof of Lemma~\ref{lem:CovInflationCondition_BipartiteSources} and mentioned at the end of Section~\ref{sec:BipartiteSourceNetworks}. 
However, in the multipartite case, we need to use arbitrary permutations that cannot be simultaneously diagonalized. This makes it difficult to obtain a simple direct description of the cone $\cW$.

\begin{proof}
For any $n\times n$ matrix $M$ we let $M^{(\alpha)}$ be the principal submatrix of $M$ consisting of rows and columns indexed by parties $i$ with $\alpha\to i$. We call $M^{(\alpha)}$ the \emph{$\alpha$-block} of $M$. 
Let $\cL_\alpha$ be the cone of positive semidefinite matrices $M$ whose only non-zero entries are in their $\alpha$-block, i.e.,  $M_{ij}\neq 0$ only if $\alpha\to i, j$. We note that as the dual of the cone of positive semidefinite matrices is itself, $\cL_\alpha^*$ is the space of matrices whose $\alpha$-block is positive semidefinite. On the other hand, 
by definition 
$\cL = \sum_\alpha \cL_\alpha$,
so by Lemma~\ref{lem:cones} we have
$$\cL^* = \bigcap_{\alpha} \cL_{\alpha}^*.$$
That is, $\cL^*$ consists of matrices whose $\alpha$-block for any source $S_\alpha$ is positive semidefinite.

\paragraph{$\overline{\cW}\subseteq \cL^*$:}
By definition, any $\alpha$-block of any matrix $W\in \cW$ is a Gram matrix and is positive semidefinite. Then by the above discussion, we have $\cW\subseteq \cL^*$. On the other hand $\cL^*$, as a dual cone, is closed. Therefore, $\overline{\cW}\subseteq \cL^*$. 

\paragraph{$\cL^*\subseteq \overline{\cW}$:}
Let $M\in \cL^*$. In the following we show that $M$ can be approximated by elements of $\cW$ with arbitrary precision, which shows that $M\in \overline{\cW}$. To this end, we note that if $i\neq j$ do not share any source, the $(i,j)$-entries of matrices in $ \cW$ can take any value. So such entries of $M$ can be ignored in approximating $M$ with elements of $\cW$. Indeed, it suffices to consider entries of $M$ that belong to an $\alpha$-block for some source $\alpha$.

Since by the characterization of $\cL^*$, the $\alpha$-block of $M$ is positive semidefinite, there are vectors $\ket{\phi_i^\alpha}$, for any $i$ with $\alpha\to i$, such that 
$$M_{ij} = (M^{(\alpha)})_{ij} = \bra{\phi_i^\alpha}\phi_j^\alpha\rangle. $$
By padding these vectors with zero coordinates if necessary, we assume that $\ket{\phi_i^\alpha}\in \mathbb C^d$ for all $i, \alpha$, for some $d$. 
To prove the lemma we need to show that for any $\eps>0$ there are vectors $\ket{\psi_i}$ and permutations $\pi_i^\alpha$ such that
$$\big|\bra{\psi_i} P_{\pi_i^\alpha}^\dagger P_{\pi_j^\alpha}\ket{\psi_j} - \bra{\phi_i^\alpha}\phi_j^\alpha\rangle     \big|\leq \eps.$$
To construct these vectors we use Lemma~\ref{lem:embezzlement-gen}. 
Let 
$$\ket{\psi_i} = \|\phi_i^\alpha\| \cdot \ket{\vartheta_T}\otimes \ket{\mu_R} = \sqrt{M_{ii}} \ket{\vartheta_T}\otimes \ket{\mu_R}.$$
where $\ket{\mu_R}$ and $\ket{\vartheta_T}$ are defined in~\eqref{eq:def-mu-R} and~\eqref{eq:def-vartheta-T} respectively. 
By Lemma~\ref{lem:embezzlement-gen}  there exists a permutation $\pi_i^\alpha$ such that 
$$\bra{\psi_i} P_{\pi_i^\alpha}\big(\ket{\vartheta_T}\otimes \ket{\phi_i^\alpha}\otimes \ket{\mu_R}\big)\geq M_{ii} \,\kappa_{T,R}, $$
where $\kappa_{T, R}\to 0$ as $T, R\to \infty$.
This means that for sufficiently large $T, R$, the vector $P_{\pi_i^\alpha}\ket{\psi_i}$ is arbitrarily close to $\ket{\vartheta_T}\otimes\ket{\phi_i^\alpha}\otimes \ket{\mu_R}$. Therefore, for sufficiently large $T, R$ we have
$$\eps\geq \big|\bra{\psi_i} P_{\pi_i^\alpha}^\dagger P_{\pi_j^\alpha}\ket{\psi_j} - \langle \vartheta_T|\vartheta_T\rangle\cdot \bra{\phi_i^\alpha}\phi_j^\alpha\rangle  \cdot \langle \mu_R|\mu_R\rangle   \big|= \big|\bra{\psi_i} P_{\pi_i^\alpha}^\dagger P_{\pi_j^\alpha}\ket{\psi_j} - \bra{\phi_i^\alpha}\phi_j^\alpha\rangle     \big|.$$
We are done.

\end{proof}

Putting Lemma~\ref{lem:gen-inflation} and Lemma~\ref{lem:dual-cone-description} together, we can now prove Theorem~\ref{thm:main-result}.

\begin{proof}[Proof of Theorem~\ref{thm:main-result}]
Using the notation of Lemma~\ref{lem:dual-cone-description} we need to show that 
$$\cC=\cC(f_1, \dots, f_n) \in \cL.$$
To this end, using $(\cL^*)^*=\cL$, it suffices to show that for any $W\in \cL^*$ we have 
$$\langle \cC, W\rangle = \tr(\cC W)\geq 0.$$ 
Then using $\cL^*=\overline \cW$ established in Lemma~\ref{lem:dual-cone-description} and by a continuity argument, we may assume that $W\in \cW$ and takes the form~\eqref{eq:def-W}.  Now letting $\ket{J} = \sum_{j=1}^n \ket j$, we have
$$\langle \cC, W\rangle =\bra J \cC\circ W\ket J. $$
On the other hand, by Lemma~\ref{lem:gen-inflation} the matrix $\cC\circ W$ is positive semidefinite. Therefore, $\langle \cC, W\rangle\geq 0$.

\end{proof}

\section{Conclusion}

In this paper we showed that the covariance matrix of any output distribution of NDCS networks can be written as a summation of certain positive semidefinite matrices. This result holds in the local classical theory, the quantum theory, the \emph{box world}~\cite{Skrzypczyk2009couplers,Short2010couplers} and more generally in any GPT~\cite{Barrett2007GPT,Janotta2013norestriction, Weilenmann2020SelfTestingQMech} compatible with the network structure. 
To our knowledge, our result provides the first universal limit on correlations in \emph{generic networks}, beyond constraints derived in specific ones (see \cite{gisin2020constraints, bancal2021}).
As also mentioned in~\cite{kela2019semidefinite, aaberg2020semidefinite} this covariance decomposition condition can be stated as a semidefinite program and can  be verified efficiently. 

Our proof technique is valid only for the subclass of NDCS networks, while the covariance decomposition is known to hold for all networks in the case of local and quantum models. Hence, the necessity of this NDCS assumption in the case of generic GPTs is an open question; does the network-compatible matrix decomposition holds in all GPTs for networks that do not satisfy the NDCS assumption? Another open problem is to generalize our results for vector-valued functions beyond scalar ones.

The main conceptual tool in our proof is the inflation technique.
It is proven in~\cite{navascues2020inflation} that inflations with unbounded fanout characterize correlations in the local classical model. 
On the other hand, non-fanout inflation allows for a characterization of the ultimate limits on correlations in networks~\cite{gisin2020constraints,coiteuxToBePublished,pironioToBePublished}.
Our results show that even non-fanout inflations induce quite powerful limits on correlations in networks. To our knowledge, our work contains the first proof exploiting inflated networks of arbitrary sizes.

Note that the covariance matrix of a correlation depends only on its bipartite marginals. Yet, even the simple hexagon inflation of the triangle network (see Figure~\ref{fig:inflation}) allows for proving restrictions on multipartite marginals~\cite{gisin2020constraints}.
Thus, it would be interesting to find generic limits imposed on multipartite correlation functions by considering well-chosen families of non-fanout inflations, possibly of unbounded size.
For instance, it would be interesting to see if the Finner's inequality~\cite{finner1992}, as a constraint on multipartite correlations,  holds for arbitrary GPTs. 
This inequality is proven in~\cite{renou2019limits} for the quantum as well as the box world when the underlying network is the triangle.

At last, let us mention that our results may be of interest beyond causal inference in networks, particularly in coherence and entanglement theory. In Lemma~\ref{lem:AlternativeSBMD_BipartiteSources} and Lemma~\ref{lem:dual-cone-description} we characterized the space of matrices that admit a network-compatible matrix decomposition. Lemma~\ref{lem:AlternativeSBMD_BipartiteSources} has applications in coherence and entanglement theory. Thus, Lemma~\ref{lem:dual-cone-description} as a generalization of this lemma may find applications in these theories or elsewhere. Moreover, the proof of this theorem shows that the idea of embezzlement may find other applications in matrix analysis beyond the entanglement theory.

\section*{Acknowledgements.} The authors are thankful to Amin Gohari for bringing the example of Gaussian sources to their attention to show the tightness of Theorem~\ref{thm:main-result}; and to Elie Wolfe, Stefano Pironio and Nicolas Gisin for clarifying the different ways to define the ultimate limits of correlations in networks.
M.-O.R.~is supported by the Swiss National Fund Early Mobility Grant P2GEP2\_191444 and acknowledges the Government of Spain (FIS2020-TRANQI and Severo Ochoa CEX2019-000910-S), Fundació Cellex, Fundació Mir-Puig, Generalitat de Catalunya (CERCA, AGAUR SGR 1381) and the ERC AdG CERQUTE.


\begin{thebibliography}{10}

\bibitem{aaberg2020semidefinite}
J.~{\AA}berg, R.~Nery, C.~Duarte, and R.~Chaves.
\newblock Semidefinite tests for quantum network topologies.
\newblock {\em Physical Review Letters}, 125(11):110505, 2020.

\bibitem{Allen2017QcommonCauses}
J.-M.~A.~Allen, J.~Barrett, D.~C.~Horsman, C.~M.~Lee, and R.~W.~Spekkens.
\newblock Quantum common causes and quantum causal models.
\newblock {\em Phys. Rev. X}, 7:031021, Jul 2017.

\bibitem{bancal2021}
J.-D. Bancal and N.~Gisin.
\newblock Non-local boxes for networks.
\newblock {\em arXiv preprint arXiv:2102.03597}, 2021.

\bibitem{Barrett2007GPT}
J.~Barrett.
\newblock Information processing in generalized probabilistic theories.
\newblock {\em Phys. Rev. A}, 75:032304, Mar 2007.

\bibitem{bell1964einstein}
J.~S. Bell.
\newblock On the Einstein Podolsky Rosen paradox.
\newblock {\em Physics Physique Fizika}, 1(3):195, 1964.

\bibitem{Branciard2010}
C.~Branciard, N.~Gisin, and S.~Pironio.
\newblock {Characterizing the Nonlocal Correlations Created via Entanglement
  Swapping}.
\newblock {\em Phys. Rev. Lett.}, 104:170401, Apr 2010.

\bibitem{coiteuxToBePublished}
X.~Coiteux-Roy, E.~Wolfe, and M.-O. Renou.
\newblock In preparation.
\newblock 2021.

\bibitem{dattorro2010convex}
J.~Dattorro.
\newblock {\em Convex optimization \& Euclidean distance geometry}.
\newblock Meboo Publishing, 2010.

\bibitem{finner1992}
H.~Finner.
\newblock A Generalization of Holder's Inequality and Some Probability Inequalities.
\newblock {\em The Annals of probability}, 20:1893--1901, 1992.

\bibitem{Fritz2012BeyondBell}
T.~Fritz.
\newblock Beyond Bell's theorem: correlation scenarios.
\newblock {\em New Journal of Physics}, 14(10):103001, oct 2012.

\bibitem{FritzBeyondBell2}
T.~Fritz.
\newblock Beyond Bell's theorem ii: Scenarios with arbitrary causal structure.
\newblock {\em Communications in Mathematical Physics}, 341:391–434, 2016.

\bibitem{gisin2020constraints}
N.~Gisin, J.-D. Bancal, Y.~Cai, P.~Remy, A.~Tavakoli, E.~Z. Cruzeiro,
  S.~Popescu, and N.~Brunner.
\newblock Constraints on nonlocality in networks from no-signaling and
  independence.
\newblock {\em Nature Communications}, 11(1):1--6, 2020.

\bibitem{HensonLimitCorrelationGeneralisedBayNet}
R.~L. J.~Henson and M.~F. Pusey.
\newblock Theory-independent limits on correlations from generalized bayesian
  networks.
\newblock {\em New Journal of Physics}, 16:113043, 2014.

\bibitem{Janotta2013norestriction}
P.~Janotta and R.~Lal.
\newblock Generalized probabilistic theories without the no-restriction
  hypothesis.
\newblock {\em Phys. Rev. A}, 87:052131, May 2013.

\bibitem{kela2019semidefinite}
A.~Kela, K.~Von~Prillwitz, J.~{\AA}berg, R.~Chaves, and D.~Gross.
\newblock Semidefinite tests for latent causal structures.
\newblock {\em IEEE Transactions on Information Theory}, 66(1):339--349, 2019.

\bibitem{kraft2020characterizing}
T.~Kraft, C.~Spee, X.-D. Yu, and O.~G{\"u}hne.
\newblock Characterizing quantum networks: Insights from coherence theory.
\newblock {\em arXiv preprint arXiv:2006.06693}, 2020.

\bibitem{navascues2020inflation}
M.~Navascu{\'e}s and E.~Wolfe.
\newblock The inflation technique completely solves the causal compatibility
  problem.
\newblock {\em Journal of Causal Inference}, 8(1):70--91, 2020.

\bibitem{Pearl2009Causality}
J.~Pearl.
\newblock {\em Causality}.
\newblock Cambridge University Press, 2009.

\bibitem{pironioToBePublished}
S.~Pironio.
\newblock In preparation.
\newblock 2021.

\bibitem{Popescu1994}
S.~Popescu and D.~Rohrlich.
\newblock {Quantum Nonlocality as an Axiom}.
\newblock {\em Found. Phys.}, 24(3):379--385, 1994.

\bibitem{popescu1998causality}
S.~Popescu and D.~Rohrlich.
\newblock Causality and nonlocality as axioms for quantum mechanics.
\newblock In {\em Causality and Locality in Modern Physics}, pages 383--389.
  Springer, 1998.

\bibitem{renou2020network}
M.-O. Renou and S.~Beigi.
\newblock Network nonlocality via rigidity of token-counting and
  color-matching.
\newblock {\em arXiv preprint arXiv:2011.02769}, 2020.

\bibitem{renou2019limits}
M.-O. Renou, Y.~Wang, S.~Boreiri, S.~Beigi, N.~Gisin, and N.~Brunner.
\newblock Limits on correlations in networks for quantum and no-signaling
  resources.
\newblock {\em Physical Review Letters}, 123(7):070403, 2019.

\bibitem{ringbauer2018certification}
M.~Ringbauer, T.~R. Bromley, M.~Cianciaruso, L.~Lami, W.~S. Lau, G.~Adesso,
  A.~G. White, A.~Fedrizzi, and M.~Piani.
\newblock Certification and quantification of multilevel quantum coherence.
\newblock {\em Physical Review X}, 8(4):041007, 2018.

\bibitem{Scarani2006PRbox}
V.~Scarani.
\newblock {Feats, Features and Failures of the {PR}-box}.
\newblock In {\em {AIP} Conference Proceedings}. {AIP}, 2006.

\bibitem{Short2010couplers}
A.~J. Short and J.~Barrett.
\newblock Strong nonlocality: a trade-off between states and measurements.
\newblock {\em New J. Phys.}, 12(3):033034, 2010.

\bibitem{singh2020ppt}
S.~Singh and I.~Nechita.
\newblock The ppt$^2$ conjecture holds for all choi-type maps.
\newblock {\em arXiv preprint arXiv:2011.03809}, 2020.

\bibitem{Skrzypczyk2009couplers}
P.~Skrzypczyk and N.~Brunner.
\newblock Couplers for non-locality swapping.
\newblock {\em New J. Phys.}, 11(7):073014, 2009.

\bibitem{Spirtes2000CausationPrediction}
P.~Spirtes, C.~Glymour, R.~Scheines, D.~Heckerman, C.~Meek, G.~Cooper, and
  T.~Richardson.
\newblock {\em Causation, prediction, and search}.
\newblock MIT Press, 2000.

\bibitem{van2003universal}
W.~van Dam and P.~Hayden.
\newblock Universal entanglement transformations without communication.
\newblock {\em Physical Review A}, 67(6):060302, 2003.

\bibitem{Weilenmann2020SelfTestingQMech}
M.~Weilenmann and R.~Colbeck.
\newblock Self-testing of physical theories, or, is quantum theory optimal with
  respect to some information-processing task?
\newblock {\em Phys. Rev. Lett.}, 125:060406, Aug 2020.

\bibitem{wolfe2019inflation}
E.~Wolfe, R.~W. Spekkens, and T.~Fritz.
\newblock The inflation technique for causal inference with latent variables.
\newblock {\em Journal of Causal Inference}, 7(2), 2019.

\end{thebibliography}

{\small

}


\appendix

\section{Proof of Lemma~\ref{lem:cones}}\label{app:cones}
Items (i), (ii), (iv) and (v) follow from the definition of dual cones. 

To prove (iii), note that by definition $\cK\subseteq (\cK^*)^*$. To prove the inclusion in the other direction, let $Z\notin \cK$. Since $\cK$ is convex, by the Hahn-Banach separation theorem there exists $Y$ and $c\in \mathbb R$ such that $\langle Y, X\rangle\geq c$ for all $X\in \cK$ and $\langle Y, Z\rangle<c$. We note that by the definition of cones, $X=0$ belongs to $\cK$. Therefore, $c\leq 0=\langle Y, 0\rangle$. On the other hand, suppose that $X\in \cK$ is such that $x=\langle Y, X\rangle <0$. Then letting $r=(c-1)/x>0$, we have $rX\in \cK$. Therefore,
$c\leq \langle Y, rX\rangle = r\langle Y, X\rangle = c-1$ which is a contradiction. This means that for any $X\in \cK$ we have $\langle Y, X\rangle \geq 0$. As a result, $Y\in \cK^*$ while we have $\langle Y, Z\rangle<c\leq 0$. This means that $Z$ does not belong to $(\cK^*)^*$. Therefore, the complement of $\cK$ is a subset of the complement of $(\cK^*)^*$ which means that $(\cK^*)^*\subseteq \cK$.

To prove (vi) note that by (ii) we have $\cK_i^*\subseteq (\cK_1\cap \cK_2)^*$ for $i=1,2$. Then since $(\cK_1\cap \cK_2)^*$ is a closed convex cone and closed under summation, we have $\overline{\cK_1^*+\cK_2^*}\subseteq (\cK_1\cap \cK_2)^*$. To prove the other direction, suppose that $Z\notin \overline{\cK_1^*+\cK_2^*}$. Then by the Hahn-Banach separation theorem and following similar steps as in the proof of (iii) we find that there exists $Y$ such that $\langle Y, X\rangle\geq 0$ for all $X\in \overline{\cK_1^*+\cK_2^*}$, and $\langle Y, Z\rangle <0$. Since $\cK_i^*\subseteq \overline{\cK_1^*+\cK^*_2}$, for $i=1, 2$, by definition $Y\in (\cK_i^*)^*$. Then by (iii) we have $Y\in \cK_i$. This means that $Y\in \cK_1\cap \cK_2$. On the other hand, $\langle Y, Z\rangle<0$, so $Z\notin (\cK_1\cap \cK_2)^*$. We conclude that the complement of $\overline{\cK_1^*+\cK_2^*}$ is a subset of the complement $(\cK_1\cap \cK_2)^*$, which means that $(\cK_1\cap \cK_2)^*\subseteq \overline{\cK_1^*+ \cK_2^*}$. 

For (vii) take the dual of both sides of (iv) and use (iii).
\section{Proof of Lemma~\ref{lem:embezzlement}}\label{app:embezzlement}

Let $\ket{\lambda_R} \in \mathbb C^{dR}$ be the state obtained from $\ket{\phi}\otimes \ket{\mu_R}$ via identification of $[d]\times[R]$ with $[dR]$, so the coordinates of $\ket{\lambda_R}$ are  $\frac{c_j} {\sqrt{r\chi_R}}$ for $(j, r)\in [d]\times [R]$. Let $\pi$ be the permutation that sorts these coordinates in the decreasing order. Therefore,
$$\ket{\tilde\lambda_R} =P_\pi \ket{\lambda_R} = \sum_{s=1}^R \tilde \lambda_s\ket s,$$
with the multi-set of $\tilde \lambda_s$'s being the same as the multi-set of $\frac{c_j} {\sqrt{r\chi_R}}$'s and $\tilde\lambda_1\geq \tilde\lambda_2\geq \cdots \geq \tilde\lambda_{dR}$. In the following we show that $\bra{\mu_R} \tilde \lambda_R\rangle$ is close to $1$ when $R$ is large. 

Let 
$$N_j^t = \Big|  \Big\{ r: \,       \frac{c_j} {\sqrt{r\chi_R}}> \frac{1}{\sqrt{t\chi_R}}     \Big \}   \Big|=\big| \big\{  r:\,  c_j^2 t> r     \big\}\big|.$$
Then we have $N_j^t< c_j^2t$. Moreover, by the normalization of $\ket{\phi}$ we have
$$\sum_{j=1}^d N_j^t< t\sum_{j=1}^d c_j^2 =t. $$ 
This means that 
$$\Big|\Big\{ s:\, \tilde \lambda_s>    \frac{1}{\sqrt{t\chi_R}}  \Big \}\Big|<t.$$
Next, using the fact that $\tilde \lambda_1\geq \cdots\geq \tilde \lambda_t$ we conclude that 
$$  \frac{1}{\sqrt{t\chi_R}}\geq \tilde \lambda_t, \qquad \forall t. $$ 
Therefore,
\begin{align*}
\bra{\mu_R} \tilde \lambda_R\rangle &= \sum_{r=1}^R \frac{\tilde \lambda_r}{\sqrt{r \chi_R}}
\geq \sum_{r=1}^R \tilde \lambda_r^2
\geq \Big(\sum_{j=1}^{d} c_j^2 \Big)\cdot \Big(  \sum_{s=1}^{\lfloor R/d\rfloor} \frac{1}{s\chi_R}   \Big) = \frac{\chi_{\lfloor R/d\rfloor}}{\chi_R}
\geq \frac{\ln R-\ln d}{\ln R+1}.
\end{align*}
Thus, $\bra{\mu_R} \tilde \lambda_R\rangle$ is arbitrarily close to $1$ for sufficiently large $R$. We are done since $\ket{\tilde \lambda_R}$ is obtained from $\ket\phi\otimes \ket{\mu_R}$ by applying a permutation.


\section{$\cK$ and $\cL$ are closed convex cones}\label{app:K-L-ccc}
In this appendix we show that the two sets $\cK, \cL$ defined in the proof of Lemma~\ref{lem:AlternativeSBMD_BipartiteSources} are closed convex cones. 

From the definition it is clear that $\cK$ is closed and that $rM\in \cK$ for $M\in \cK$ and $r\geq 0$. Then we need to show that for $P, Q\in \cK$, we have $M=P+Q\in \cK$. 
Let $\ket{v} = \sum_i v_i \ket{i}$ be an arbitrary vector. We compute 
\begin{align*}
\bra v \widehat M \ket v &\geq \sum_{i} |v_i|^2 M_{ii} - \sum_{i, j} |v_i v_j|\cdot |M_{ij}|  \\
& = \sum_{i, j} |v_i|^2 (P_{ii} + Q_{ii}) - \sum_{i\neq j} |v_iv_j|\cdot |P_{ij} + Q_{ij}| \\
& \geq \sum_{i, j} |v_i|^2 (P_{ii} + Q_{ii}) - \sum_{i\neq j} |v_iv_j|\cdot \big(|P_{ij}| + |Q_{ij}|\big) \\
& = \bra {\hat v} \widehat P\ket {\hat v} + \bra {\hat v} \widehat Q\ket {\hat v}\\
&\geq 0,
\end{align*} 
where $\ket {\hat v}$ is the vector whose coordinates are $|v_i|$'s, and in the last line we use $\widehat P, \widehat Q\succeq 0$. As a result, $\widehat M\succeq 0$ and $M\in \cK$. Therefore, $\cK$ is a closed convex cone. 

From the definition it is clear that $\cL$ is a convex cone. Thus we need to show that $\cL$ is closed. 
Indeed, the sum of closed convex cones is not necessarily closed, yet this holds for $\cL=\sum_\alpha \cL_\alpha$. To prove this, suppose that the sequence $\{X^{(j)}:j\geq 1\}\subset \cL$ tends to $X$ as $j\to \infty$. Since $X^{(j)}$'s and $X$ are positive semidefinite, for sufficiently large $j$ we have $0\preceq X^{(j)}\preceq X+I$. 
On the other hand, since $X^{(j)}$ belongs to $\cL$, there are $X^{(j)}_\alpha\in \cL_\alpha$ such that $X^{(j)}= \sum_\alpha X_\alpha^{(j)}$. Again using the fact that $X^{(j)}_\alpha$'s are positive semidefinite, for sufficiently large $j$ we have 
$$0\preceq X_\alpha^{(j)}\preceq X^{(j)}\preceq X+I.$$
Then by a compactness argument, there is subsequence $\{j_k:\, k\geq 1\}$ such that $\lim_{k\to \infty} X_\alpha^{(j_k)} =X_\alpha$ exists for all $\alpha$. Now since $\cL_\alpha$ is closed, we have $X_\alpha\in \cL_\alpha$. On the other hand, 
$$X=\lim_{k\to \infty} X^{(j_k)} = \lim_{k\to \infty} \sum_\alpha  X^{(j_k)}_\alpha = \sum_\alpha  X_\alpha.$$ 
This means that $X\in \cL$. Therefore, $\cL$ is also a closed convex cone.

\end{document}